\newtheorem{theorem}{Theorem}
\newcommand{\theoremend}{\nopagebreak $\qquad \blacklozenge$ \vspace*{4pt}}
\newtheorem{definition}{\vspace*{4pt} Definition \newline}
\newcommand{\definitionend}{\nopagebreak $\qquad \blacklozenge$ \vspace*{4pt}}
\newenvironment{remark}{\vspace*{4pt} \textbf{Remark} $ \quad $} {\nopagebreak \newline \null $\qquad \clubsuit$\vspace*{4pt}}
\newenvironment{proof} {\vspace*{4pt} \textsc{Proof} \begin{quote} \begin{small}} {\end{small} \end{quote} \nopagebreak $\qquad \heartsuit$ \vspace*{4pt}}
\begin{document}

\title[Effective dynamics on thin Dirichlet waveguides]{Effective Schr{\"o}dinger dynamics on $ \varepsilon $-thin Dirichlet waveguides via Quantum Graphs I: star-shaped graphs}

\author{G F Dell'Antonio and E Costa}

\address{Mathematical Physics Sector, SISSA, via Beirut 2-4, 34151 Trieste (ITALY)}

\eads{\mailto{gianfa@sissa.it}, \mailto{ecosta@sissa.it}}

\vspace*{1cm}

\begin{center}
\textit{in memory of Pierre Duclos}
\end{center}

\vspace*{1cm}

\begin{abstract}
We describe the boundary conditions at the vertex that one must choose to obtain a dynamical system that best describes the low-energy part of the evolution of a quantum system confined to a very small neighbourhood of a star-shaped metric graph.
\end{abstract}

\ams{81Q35, 81Q37}

\submitto{\JPA}

\maketitle

\section{Introduction}

Often in Physics at the nanoscale one deals with systems which can be regarded as having support in a very small neighborhood $ \Gamma^\varepsilon \subseteq \mathbb{R}^3 $ of a one dimensional graph $\Gamma$. The parameter $ \varepsilon $ is a measure of  the distance of $ \partial \Gamma^\varepsilon $ from $\Gamma$ as compared to the length of a typical edge of the graph.

The dynamics of such systems is described by a propagation equation, which can be the wave equation in the case of optical wires or the Schr{\"o}dinger equation in the case of conduction electrons in a macromolecule or in the case of conducting nano-devices. In the case of optical wires it is reasonable to choose Neumann (reflecting) boundary conditions at the boundary $ \partial \Gamma^\varepsilon $. In the case of macromolecules or nano-devices on the other hand one can consider that the system is confined in the region $ \Gamma^\varepsilon $ by very strong forces at the atomic scale, which provide a potential having the form of very deep and narrow valleys. In this case the use of Dirichlet boundary conditions can be regarded as a reasonable approximation.

The inital value problem is well posed for both boundary conditions, but in general one is not able to give the solution in an explicit form. It is then useful to search for a dynamical system which is explicitly solvable and at the same time provides an approximation to the physical one when $ \varepsilon $ is very small. In this case it is important to give an estimate of the error one makes in this approximation for the observables of interest.

One can expect that the limit dynamical system be described by a wave equation or respectively a Schr{\"o}dinger equation on the graph. But if this is the case, in order to have a well posed problem one must specify boundary conditions at the vertices. So the problem becomes: which are the boundary conditions (if any) that one must choose at the vertices of the (metric) graph, in order to have an evolution equation on the graph with solutions which give an approximation (to order $ \varepsilon ^\alpha $, for some $ \alpha > 0 $) for the expectation values of the relevant physical observables? One may reasonably expect that the answer depends on the shape of $ \Gamma^\varepsilon $ in the vicinity of the vertices.

In the case of Neumann boundary conditions the answer is known \cite{exner05} for the case of the Schr{\"o}dinger equation, at least for initial data with not too large energy. Indeed in this case the initial wavefunction can be chosen smooth (e. g. in $ W^{2, 2} $) uniformly in $ \varepsilon $ and the solution can be restricted to the graph uniformly when $ \varepsilon \to 0 $. As a consequence, one can define a sequence of maps from $ \Gamma^\varepsilon $ to $ \Gamma $, and prove that the trace on $ \Gamma $ of the resolvent of the Schr{\"o}dinger operator on $ \Gamma^\varepsilon $ converges to the resolvent of a Schr{\"o}dinger operator on $ \Gamma $  with Kirchhoff (coupling) vertex conditions (continuity of the solution, zero sum of the outward derivatives). The situation is entirely different in the case of Dirichlet boundary conditions on $ \partial \Gamma^\varepsilon $: the $ W^{1, 2} $-norm of the initial datum increases without bound as $ \varepsilon \to 0 $, and there is no limit trace on $ \Gamma $. For this reason, the question of the existence of a limit flow had not been answered so far \cite{kuchment08}, in spite of the obvious physical and mathematical interest.

In concrete cases of physical devices, models of limit dynamics have been constructed to fit the experimental data. For instance, in the case of a sharply bent conducting device experimental evidence shows that in order to have a good approximation one must use Dirichlet (decoupling) boundary conditions at the bend. On the other hand, the standard treatment of conducting electrons in aromatic molecules (such as graphene or benzene), in which the molecule is represented by a graph, shows that results in fair accordance with experiments are obtained if the limit model is constructed with conditions at the vertices that are of weighted Kirchhoff type. The \textit{phenomenological} values of the weights are different for different molecules. One can consider this as an evidence that the ``right'' boundary conditions depend on the shape of $ \Gamma^\varepsilon $. 

Indeed the images seen at the electronic microscope are different for different molecules (one takes $ \Gamma^\varepsilon $ to be the region in which the density of conducting electrons is appreciably different from zero) but in all cases they have the shape of an annulus around the nucleus, with tunnels in correspondence of the valence bonds. One interprets this structure as due to the combined action of the attraction by the nucleus and the exclusion principle that forbids the conduction electrons to occupy the region of the core electrons.

Here we give an answer to the mathematical question described above. We show that the generator of the limit dynamics on the graph corresponds to a suitable boundary condition at the $i$-th vertex $ V_i $, which depends on the shape of $ \Gamma^\varepsilon $ near $ V_i $ through the spectral properties of a sequence of auxiliary Schr{\"o}dinger operators defined on a suitable neighbourhood of the vertex $ V_i $.

In this paper we analyze the problem of approximating the free Schr{\"o}dinger evolution on an $ \varepsilon $-thin Dirichlet star-shaped waveguide, with semi-infinite cylindrical ends of thickness $\varepsilon$; in a future publication we shall generalize our results to cover generic waveguides and smooth potentials. In this paper we study convergence of solutions and (weak) convergence of resolvents, and briefly suggest how to add an external potential and study convergence of scattering matrices.

\section{Description of the problem}

In this section, we describe the structure of the waveguide in detail, and we introduce some notation. The waveguide $ \Gamma^\varepsilon $ is obtained by gluing smoothly to the vertex region $ \Omega^\varepsilon $ (a compact set in $\mathbb{R}^3$) a number of semi-infinite cylindrical ends, the branches of the waveguide; the boundary $ \partial \Gamma^\varepsilon $ is smooth\footnote{For illustrative purposes, in the Figures we depict two-dimensional waveguides with edges at some points of the boundary. Note that the hypothesis of smooth boundary could be weakened, and the dimension of the waveguide could be any $ d \geq 2 $.}. The cylindrical branches have two-dimensional section $ \Sigma^\varepsilon $, a compact domain in $ \mathbb{R}^2 $ with smooth boundary, of linear dimension proportional to $\varepsilon$. Also, the linear dimension of the  vertex region is proportional to $\varepsilon$ so that the waveguides $ \lbrace \Gamma^\varepsilon \rbrace $ associated to different values of $\varepsilon$ are connected by a \textit{similarity transformation} (a change in the length scale). With this assumption, the geometry of $\Gamma^\varepsilon$ is essentially fixed, and the problem is equivalent to the computation of the low-energy effective dynamics in a \textit{fixed} domain. Note that this is not true anymore when we have many vertices at a fixed distance, and we shrink the cylinders connecting them.

\subsection*{Structure of the waveguide}

The branches of the waveguide are labeled by the index $j$, which runs from $1$ to $n$. Each branch is isometric to the cylinder $ \Sigma^\varepsilon \times [0, + \infty) $; then
\[ \Gamma^\varepsilon = \Omega ^\varepsilon \cup \left( \bigcup_{j=1}^n \Sigma_j^\varepsilon \times [0, + \infty) \right) \; . \]
The Dirichlet Laplacian on $\Sigma ^\varepsilon$ has discrete spectrum
\[ \left\lbrace \mu_1^\varepsilon, \mu_2^\varepsilon \ldots \right\rbrace \qquad \mu_1^\varepsilon < \mu_2^\varepsilon \leq \ldots \]
and the eigenvalues satisfy the scaling relation
\[ \mu_i^\varepsilon = \frac{\mu_i}{\varepsilon^2} \; . \]

The Dirichlet Laplacian on the entire waveguide has absolutely continuous spectrum which coincides with the semi-infinite interval $[ \mu_1^\varepsilon, + \infty)$, and possibly some discrete eigenvalues $ \lambda^\varepsilon_1 \ldots \lambda^\varepsilon_k $, not larger than $\mu_1^\varepsilon$, corresponding to bound states\footnote{bound states are known to form e.g. near the bent regions of the waveguides; see for example \cite{exner89}} with eigenfunctions $ \phi_1^\varepsilon \ldots \phi_k^\varepsilon $. They also satisfy the scaling rule
\[ \lambda_i^\varepsilon = \frac{\lambda_i}{\varepsilon^2} \; . \]

It is natural to work with the renormalized Hamiltonian
\[ H^\varepsilon = - \Delta_D - \mu_1^\varepsilon \]
acting on $ L^2(\Gamma^\varepsilon) $. In this way, the bound states have negative energy and the absolutely continuous spectrum coincides with the positive half-line.

Let us introduce the crucial concept of \textit{mesoscopic region} $ \Gamma_{int}^{\varepsilon, \ell} $, or sometimes simply $ \Gamma_{int} $. This is obtained by cutting $ \Gamma^\varepsilon $ at distance $ \ell $ from the vertex  region $ \Omega^\varepsilon $:
\[ \Gamma_{int}^{\varepsilon, \ell} \equiv \Omega^\varepsilon \cup \left( \bigcup_{j=1}^n \Sigma_j^\varepsilon \times [0, \ell] \right) \; . \]
The complement of $ \Gamma_{int} $ is referred to as $ \Gamma_{out} $:
\[ \Gamma_{out}^{\varepsilon, \ell} = \bigcup_{j=1}^n \Sigma_j^\varepsilon \times [\ell, + \infty ) \; . \]
It is very convenient to work with the parameter $L$ defined as
\[ L \equiv \ell / \varepsilon \]
instead that with $ \ell $. We refer to $L$ as to the rescaled mesoscopic lenght. With this notation, the mesoscopic region has tickness $ \varepsilon $ and size $ L \varepsilon $.

\begin{remark}\label{remark:MicroscopicAndMacroscopicCoordinates}
To identify the points in the $j$-th branch of the waveguide, we normally use Cartesian coordinates $ (x_j, y_j) $, with values in $ [0, + \infty) \times \Sigma^\varepsilon $. We can say that these coordinates are adapted to the \textit{macroscopic} scale. Later it will be convenient to make use also of rescaled coordinates $ (X_j, Y_j) $, adapted to the \textit{microscopic} scale, ranging in $ [ 0, + \infty) \times \Sigma^1 $, such that
\[ x_j = \varepsilon X_j, \quad y_j = \varepsilon Y_j \; . \]
Notice that in these coordinates, the mesoscopic region is characterized by the inequalities
\[ x_j \leq \ell, \quad X_j \leq L \; . \]
$ \phantom{A} $
\end{remark}

\begin{figure}[htbp]
\centering
\includegraphics[scale=0.7]{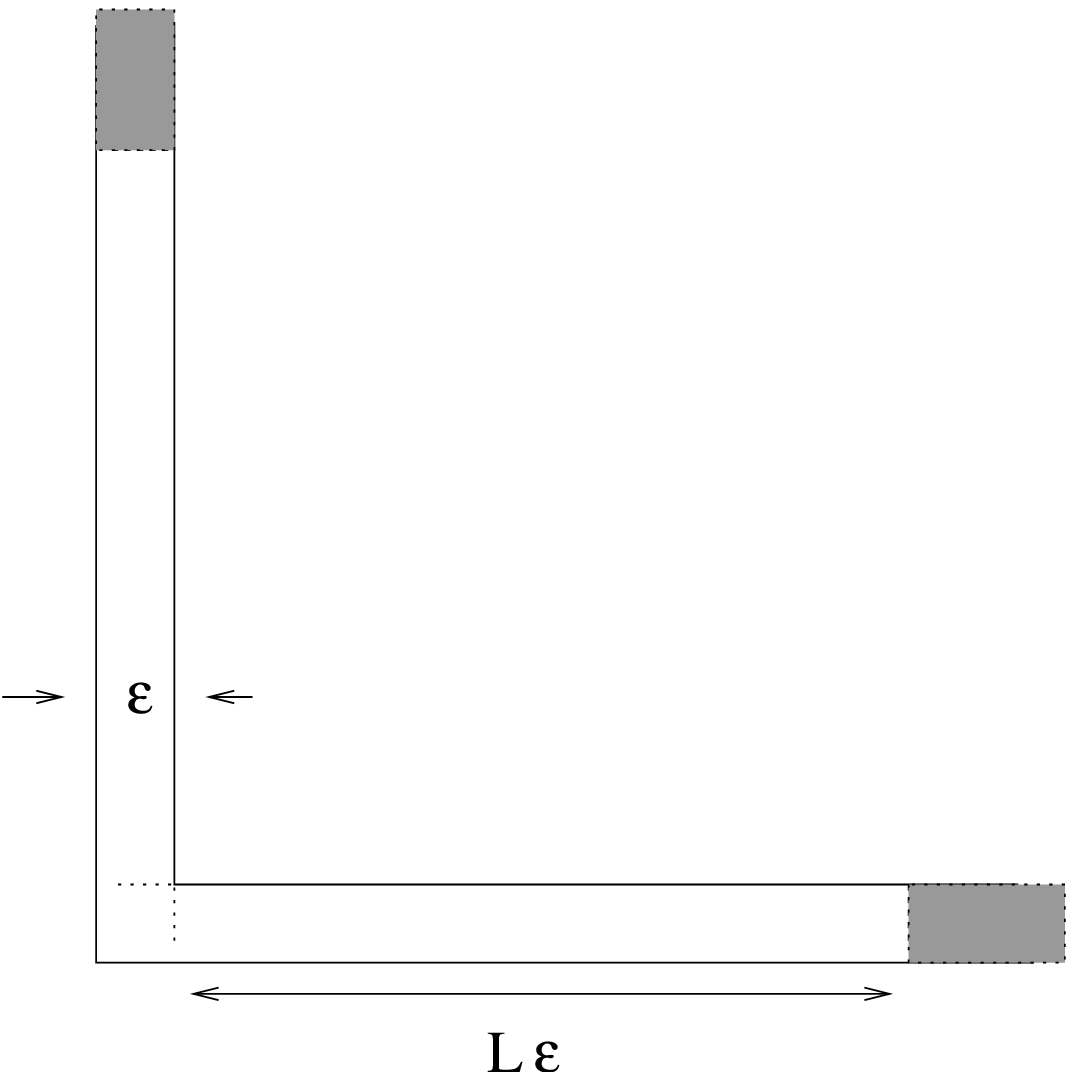}
\caption{A very simple example of waveguide, the L-shaped waveguide. The junction $ \Omega^\varepsilon $ is a square of size $ \varepsilon $, and there are two semi-infinite rectangular branches attached to it. The mesoscopic region $ \Gamma_{int}^{\varepsilon, \ell} $ is white, while $ \Gamma_{out}^{\varepsilon, \ell} $ is shadowed.}\label{fig:L_shaped_waveguide}
\end{figure}

\paragraph*{The general strategy}

The waveguide that we consider is characterized by two parameters: the \textit{microscopic} lenght $ \varepsilon $ and the \textit{mesoscopic} lenght $ L \varepsilon = \ell $. The parameter $ \ell $ is chosen so that the inequality
\[ \varepsilon \ll \ell \ll 1 \]
is satisfied: this motivates our notation, too. In terms of the rescaled mesoscopic lenght $L$, these conditions read
\[ \varepsilon \ll 1 \qquad L \gg 1 \qquad L \cdot \varepsilon \ll 1 \; . \]

Our strategy is the following: given a wavefunction $ \psi $ on $ \Gamma^\varepsilon $, solution of the Schr{\"o}dinger equation, we consider its restriction to $ \Gamma_{int}^{\varepsilon, \ell} $ and $ \Gamma_{out}^{\varepsilon, \ell} $. We prove that $ \psi \upharpoonright \Gamma_{out}^{\varepsilon, \ell} $ essentially factors into $ \Psi \otimes \chi_1^\varepsilon $ plus negligible corrections: $ \Psi $ is the effective wavefunction on the limit graph $ \Gamma $. It is clear that this approach makes sense if $ \ell \ll 1 $, so that $ \Psi $ is defined everywhere except for a small neighbourhood of the vertex.

Our second step is to understand the behaviuor of $ \Psi $ near the vertex. To do this, we study the restriction of $ \psi $ to $ \Gamma_{int}^{\varepsilon, \ell} $. It turns out that \textit{for a suitable class of initial states} (roughly speaking, low-energy states) the behaviour of $ \Psi $ at the vertex does not depend on the initial state, but only on the spectral properties of $ \Gamma_{int}^{\varepsilon, \ell} $. The estimates that constrain the behaviour of $ \Psi $ are expressed in terms of both the parameters $ \ell $, $ \varepsilon $: in concrete situations one can choose $ \ell $ to be a suitable function of $ \varepsilon $, to optimize the error. This is particularly simple for free particles, but we prefer to keep the discussion at a general level to prepare the ground for successive generalizations.

\begin{figure}[htbp]
\centering
\includegraphics[scale=0.7]{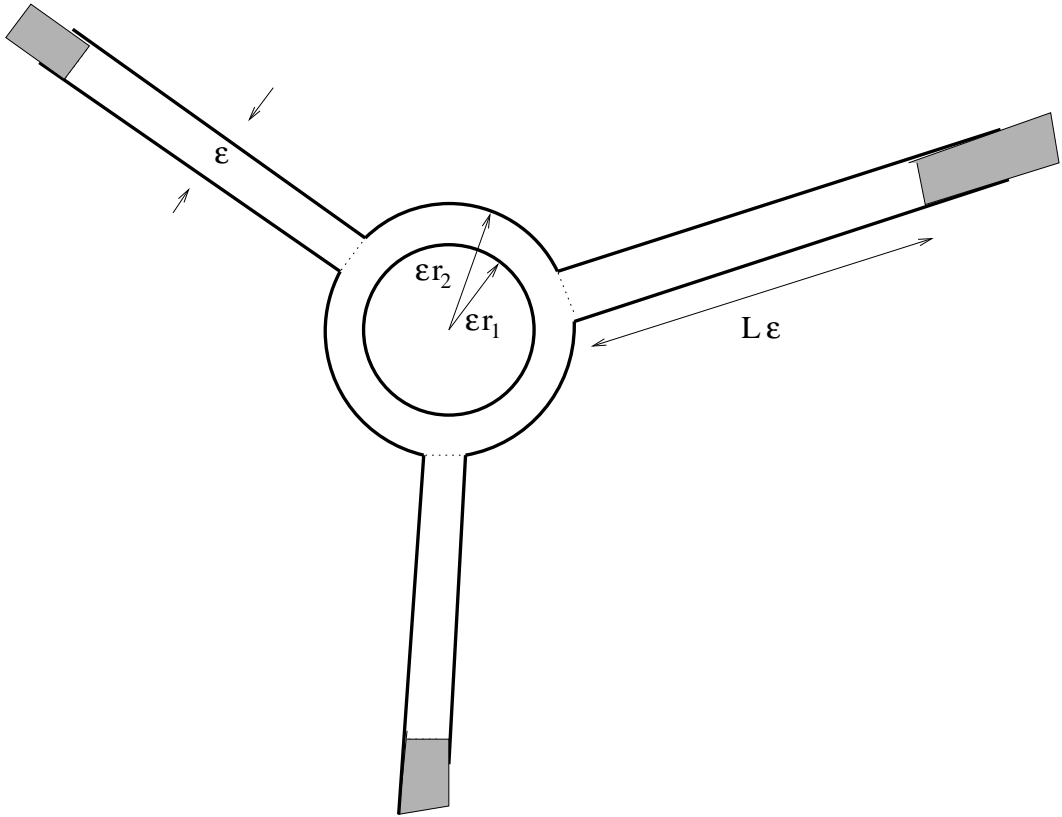}
\caption{A more complicated waveguide. Three branches of thickness $ \varepsilon $ join at a junction $ \Omega^\varepsilon $. In this case, the junction is the inner part of a circular crown, of radii $ r_1 \varepsilon $, $ r_2 \varepsilon $. The mesoscopic region is obtained by attaching to the junction a portion of the leads of lenght $ L \varepsilon $. In general, one may consider even more irregular junctions, as long as they are obtained by rescaling a fixed compact domain by $ \varepsilon $.}\label{fig:junction_with_a_hole}
\end{figure}

\subsection*{Some definitions}

We introduce here the notion of spectrum of the mesoscopic region, and illustrate some results from \cite{grieser08a} that characterize it. Given a generic junction $ \Omega^\varepsilon $ , it is clear that the explicit spectrum cannot be computed in terms of the parameters $ \varepsilon $ and $L$, even if some asymptotics may be calculated for specific geometries. Our point of view is that we can formulate some hypotheses on this spectrum regarding its qualitative behaviour near the continuum treshold, that enable us to distinguish between junctions that lead to decoupling or coupling conditions for the limit Quantum Graph.

\begin{definition}[auxiliary Hamiltonian]
On the mesoscopic region $ \Gamma_{int}^{\varepsilon, \ell} $ we define the operator
\[ H_{int}^{\varepsilon, \ell} = - \Delta - \mu_1^\varepsilon \]
where the Laplace operator $ \Delta $ is defined by Neumann boundary conditions at $ \lbrace \ell \rbrace \times \Sigma_j $ and Dirichlet boundary conditions on the rest of $ \partial \Gamma_{int}^{\varepsilon, \ell} $. $ H_{int}^{\varepsilon, \ell} $ is the \textit{auxiliary Hamiltonian} on the mesoscopic region. The spectrum of $ H_{int}^{\varepsilon, \ell} $ is discrete and it will be denoted by
\[ \sigma \left( H_{int}^{\varepsilon, \ell} \right) = \lbrace \lambda_i^{\varepsilon, \ell} \rbrace_{i \in \mathbb{N}} \; . \]
\definitionend
\end{definition}

The spectrum of the auxiliary Hamiltonian $ H_{int}^{\varepsilon, \ell} $ is sometimes referred to as the spectrum of the mesoscopic region. This spectrum is analyzed in \cite{grieser08a} in great detail, although in a slighly different setting: the domain is rescaled by a factor $ 1 / \varepsilon $ and the eigenvalues are multiplied by $ \varepsilon^2 $, accordingly. The dilated mesoscopic region has linear dimension $L$ and the branches have tickness $1$; sometimes one just says that we are working in coordinates such that $ \varepsilon = 1 $. In particular, it is shown that any eigenvalue (proper and generalized) of $ H^\varepsilon $ can be approximated by an eigenvalue of $ H_{int}^{\varepsilon, \ell} $ and vice-versa. For the point spectrum, the difference between $ \lambda_i^{\varepsilon} $ and $ \lambda_i^{\varepsilon, \ell} $ vanishes exponentially in $ L = \ell/\varepsilon $: in our notation, it is $ \mathcal{O} ( \varepsilon^{-2} e^{-L} ) $. The eigenfunctions are shown to converge too, in the proper topology, with the same speed.

An immediate consequence of the considerations above is that the first $k$ eigenvalues of $ H_{int}^{\varepsilon, \ell} $ will converge to the energy levels $ \lambda_1^\varepsilon \ldots \lambda_k^\varepsilon $ corresponding to the bound states of $ \Gamma^\varepsilon $; the other eigenvalues of $ H_{int}^{\varepsilon, \ell} $ will ``merge into the continuum''.

For each value of $ \varepsilon $, we shall divide the eigenvalues that merge into the continuous spectrum in two classes, according to qualitative properties of the approach to the bottom of the continuous spectrum when $ \varepsilon \to 0 $ and $ L \to \infty $. If there are no elements in the first class, the limit motion on the graph has Dirichlet (decoupling) boundary conditions at the vertex of the graph. The eigenfunctions corresponding to the elements of the first class (we call them \textit{resonant sequences}) provide ``bridges'' connecting the different branches of the graph; if there is only one element in this class, the motion on the limit graph is generated by a Laplacian with suitable Kirchhoff-like boundary conditions, which depend on the asymptotic form of this connecting eigenfunction. If there are more than one elements, the problem becomes more complex, and will not be treated here.

Recall that the first $k$ eigenvalues $ \lambda_1^{\varepsilon, \ell} \ldots \lambda_k^{\varepsilon, \ell} $ of $ H_{int}^{\varepsilon, \ell} $ converge to the $ k $ bound states $ \lambda_1^\varepsilon \ldots \lambda_k^\varepsilon $ of the waveguide as $ \varepsilon \to 0 $; henceforth, we will loosely refer to all these states as ``bound states'', or ``localized states''. To proceed further, it will be convenient to introduce the following

\begin{definition}[spectral gap condition]\label{def:SpectralGapCondition}
The family of (mesoscopic) regions $ \Gamma_{int}^{\varepsilon, \ell} $ satisfies the spectral gap condition if there exist $ \mu_0 > 0 $ and $ \gamma > 0 $ such that the eigenvalues of the non-bound states $ \psi_{n}^{\varepsilon, \ell} $, $ n \geq k+1 $ satisfy, uniformly in $ \varepsilon > 0 $
\begin{equation}\label{eq:SpectralGapCondition}
\lambda_{n}^{\varepsilon, \ell} \geq \frac{ \mu_0 }{ \varepsilon^2 } \cdot \left( \frac{1}{L} \right)^\gamma, \qquad n \geq k+1 \qquad L \equiv \frac{\ell}{\varepsilon} \; .
\end{equation}
\definitionend
\end{definition}

We are also interested in waveguides where the gap condition is not satisfied, i.e. there are some non-bound states with eigenvalues that do not satisfy the inequality stated above. We do not treat the general case, but limit ourselves to the following important special case:

\begin{definition}[resonant sequence condition]\label{def:ResonantSequenceCondition}
The family of (mesoscopic) regions $ \Gamma_{int}^{\varepsilon, \ell} $ satisfies the resonant sequence condition if the eigenvalues of the non-bound states $ \psi_{n}^{\varepsilon, \ell} $, $ n \geq k+2 $ (hence, all of them but the $ (k+1) $-th) satisfy condition (\ref{eq:SpectralGapCondition}) uniformly in $ \varepsilon > 0 $, while
\begin{equation}\label{eq:ResonantStateCondition}
- \frac{ \mu_0' }{ \varepsilon^2 } \cdot \left( \frac{1}{L} \right)^{\gamma'} \leq \lambda_{k+1}^{\varepsilon, \ell} \leq 0
\end{equation}
for some $ \gamma' \geq 1 $, $ \mu_0' > 0 $.\\
\definitionend
\end{definition}

We call the sequence of states $ \lbrace \psi_{k+1}^{\varepsilon, \ell} \rbrace $ which characterize the definition above a \textit{resonant sequence} of states. The name we have chosen comes from the fact that if the waveguide has a zero-energy resonant sequence (a solution of the Dirichlet problem not in $ L^2 (\Gamma^\varepsilon) $), then its restriction to $ \Gamma_{int}^{\varepsilon, \ell} $ provides a resonant sequence. The last statement follows from a variational argument, which bounds the eigenvalue from above, and a result discussed in \cite{grieser08a} which bounds the eigenvalue from below. Note that \cite{grieser08a} proves that in the self-similar case the convergence is exponentially fast. Indeed, including the polynomial bound is useful for generalization to the non-uniform case.

In a self-similar family of waveguides, if a resonance exists for one value of $ \varepsilon $, it exists for all values of $ \varepsilon $. But a resonant sequence may exist even if there is no zero-energy resonance: this will be very useful for generalising the analysis to graphs with many vertices, where self-similarity is lost. We expect the resonant state condition to hold when the waveguide has some resemblance with the domain accessible to conduction electrons of an aromatic molecule\footnote{The role of a resonant structure has been advocated by B. Pavlov, from a different point of view.}.

\begin{remark}
The inequalities of Equations \ref{eq:SpectralGapCondition}, \ref{eq:ResonantStateCondition} are written in terms of the parameters $ \varepsilon $, $L$ so that their application in the forthcoming Theorems is immediate. But we stress that if we rewrite the Definitions above \textit{for the rescaled mesoscopic region} of tickness $1$ and lenght $L$ (``take $ \varepsilon = 1 $''), for which the spectrum is rescaled by a factor $ \varepsilon^2 $, then the eigenvalues are functions of the parameter $L$ alone, and Equations \ref{eq:SpectralGapCondition}, \ref{eq:ResonantStateCondition} become conditions on the asymptotic behaviour of these eigenvalues, for $L$ large. Thus, the information about the limit Quantum Graph (wether the junction gives coupling or decoupling gluing conditions) is encoded into the asymptotics of the spectrum of the rescaled mesoscopic region. The parameter $ \varepsilon $ will play a role when we discuss the class of states (the low-energy states) that can be approximated by effective wavefunctions on the graph; at that stage, it will be convenient to choose $L$ as a function of $ \varepsilon $ in order to minimize the error terms.
\end{remark}

\paragraph*{Towards the construction of examples}

It would be very interesting to build some concrete examples of geometries such that the Definitions above are fulfilled. We give some suggestions that come from preliminary computations.

We expect that the L-shaped waveguide of Figure \ref{fig:L_shaped_waveguide} falls into Definition \ref{def:SpectralGapCondition}. This waveguide has one bound state \cite{exner89} (in our notation, $ k = 1 $), hence we need to look at the behaviour of the second eigenvalue $ \lambda_2^{\varepsilon, \ell} $ of the auxiliary Hamiltonian. For this particular geometry it should be possible to compute the asymptotic of the spectrum for large $L$, by adaptating the arguments used in the paper \cite{grieser07}: we expect that in this case the condition of Equation \ref{eq:SpectralGapCondition} is fulfilled. Note that if $ \psi_2^{\varepsilon, \ell} $ is antisymmetric with respect to the reflection through the axis of the waveguide, this follows immediately from a standard bracketing argument.

We also expect that the waveguide depicted in Figure \ref{fig:junction_with_a_hole} falls into Definition \ref{def:ResonantSequenceCondition} instead, at least for a wise choice of the radii $ r_1, r_2 $ of the junction. The idea is that there are no bound states, so that if we are able to tune the parameters in such a way that a resonance for the infinite waveguide is present, then the resonance induces a resonant sequence by the mechanism described previously. Note that the ``hole'' is a suggestion that comes from Physics: it mimics a repulsive potential, which is given for example by the core electrons of the carbon atom that sits at the vertex of a graphene layer.

Clearly, it is more difficult to exihibit examples of guides satisfying this hypothesis. It is to be expected that a resonance (or its counterpart, the resonant sequence) is not present for a generic waveguide, accordingly to the general wisdom that decoupling generically occurs for thin Dirichlet waveguides \cite{molchanov07} \cite{molchanov08} \cite{grieser08a}.

\section{Multiscale decoupling}

In this section, we want to show that \textit{outside the mesoscopic region} low-energy wavefunctions are confined to the first transverse mode, and the coefficient solves the free one-dimensional Schr{\"o}dinger equation; therefore, they can be identified with one-dimensional wavefunctions on $ \Gamma $. The next sections are devoted to describe the behaviour of the solutions inside the mesoscopic region, that is to say, near the vertex of $ \Gamma $.

Let $ \psi^{\varepsilon} $ belong to the image of the spectral projector $ P_{(0, E]} $ associated to $ H^\varepsilon $. For practical purposes, we may assume that at time zero $ \psi^\varepsilon $ is approximated by a product state
\[ \psi^\varepsilon (t = 0) \simeq \Phi \otimes \chi_1^\varepsilon \]
supported in $ \Gamma_{out}^{\varepsilon, \ell} $. This state has finite energy, and is orthogonal to the bound states $ \phi_1^\varepsilon \ldots \phi_k^\varepsilon $ of $ \Gamma^\varepsilon $: both these properties hold at all times, by unitarity of the time evolution. At any time, the restriction of $ \psi^\varepsilon $ to $ \Gamma_{out}^{\varepsilon, \ell} $ admits a convergent expansion in Fourier modes:
\[ \psi^{\varepsilon} \upharpoonright_{ \Gamma_{out}^{\varepsilon, \ell} } = \sum_{m=1}^\infty \Psi^\varepsilon_m \otimes \chi_m^\varepsilon \]
where $ \chi^\varepsilon_m $ is the $m$-th Dirichlet eigenfunction of $-\Delta$ on $\Sigma^\varepsilon$, and $ \Psi^\varepsilon_m $ is a vector-valued function on $ \mathbb{R}^+ $; its components are $ \Psi^{\varepsilon}_{m, j} $, the index $ j $ runs from $ 1 $ to $ n $ and labels the branches of the waveguide. It is useful to think about $ \Psi^\varepsilon_m $, for any $ m $, as a function on the limiting graph $ \Gamma $.

We will prove the following

\begin{theorem}[Multiscale decoupling] \label{thm:MultiscaleDecoupling}
Consider the wavefunction $ \psi^\varepsilon $ described above, and restrict it to $ \Gamma_{out}^{\varepsilon, \ell} $. We can prove that
\[ \sum_{m=2}^\infty \Vert \Psi_m^\varepsilon \Vert^2_{ L^2 (\Gamma) } \leq C \cdot \varepsilon^2 \]
for some numerical constant $ C $. Moreover,
\[ \sum_{m=2}^\infty \Vert \Psi_m^\varepsilon \Vert^2_{ W^{1, 2} (\Gamma) } \leq C' \cdot \varepsilon, \qquad \sum_{m=2}^\infty \Vert \Psi_m^\varepsilon \Vert^2_{ L^\infty (\Gamma) } \leq C' \cdot \varepsilon \]
for some numerical constant $ C' $.\\
\theoremend
\end{theorem}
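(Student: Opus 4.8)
The plan is to argue entirely with quadratic forms and the transverse Fourier decomposition on $\Gamma_{out}^{\varepsilon,\ell}$, exploiting the enormous spectral gap $\mu_m^\varepsilon-\mu_1^\varepsilon\ge(\mu_2-\mu_1)/\varepsilon^2$ for $m\ge 2$. Normalise $\|\psi^\varepsilon\|=1$. Since $\psi^\varepsilon$ is spectrally supported in $(0,E]$ it is orthogonal to the bound states $\phi_1^\varepsilon,\dots,\phi_k^\varepsilon$ (which span the negative spectral subspace of $H^\varepsilon$), and the functional calculus yields both $0\le q^\varepsilon[\psi^\varepsilon]\le E$, where $q^\varepsilon[\psi^\varepsilon]:=\|\nabla\psi^\varepsilon\|^2-\mu_1^\varepsilon\|\psi^\varepsilon\|^2=\langle\psi^\varepsilon,H^\varepsilon\psi^\varepsilon\rangle$, and $\|H^\varepsilon\psi^\varepsilon\|\le E$. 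Split the form along $\Gamma^\varepsilon=\Gamma_{int}^{\varepsilon,\ell}\cup\Gamma_{out}^{\varepsilon,\ell}$, obtaining $q^\varepsilon[\psi^\varepsilon]=q_{int}[\psi^\varepsilon]+q_{out}[\psi^\varepsilon]$, where $q_{int}$ is precisely the form of the auxiliary Hamiltonian $H_{int}^{\varepsilon,\ell}$ (the Neumann cut at $\{\ell\}\times\Sigma_j$ being the natural boundary condition for the restricted form). Inserting $\psi^\varepsilon\upharpoonright_{\Gamma_{out}^{\varepsilon,\ell}}=\sum_m\Psi_m^\varepsilon\otimes\chi_m^\varepsilon$ and using orthonormality of the $\chi_m^\varepsilon$,
\[
q_{out}[\psi^\varepsilon]=\sum_{m\ge1}\|(\Psi_m^\varepsilon)'\|_{L^2}^2+\sum_{m\ge2}(\mu_m^\varepsilon-\mu_1^\varepsilon)\|\Psi_m^\varepsilon\|_{L^2}^2\ \ge\ \frac{\mu_2-\mu_1}{\varepsilon^2}\sum_{m\ge2}\|\Psi_m^\varepsilon\|_{L^2}^2 .
\]
Hence the first estimate reduces to showing $q_{out}[\psi^\varepsilon]\le O(1)$, equivalently $q_{int}[\psi^\varepsilon]\ge -O(1)$.

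This lower bound is the heart of the matter and the only place the geometry of the junction enters: a priori $q_{int}$ is as negative as $\lambda_1^{\varepsilon,\ell}\sim -c/\varepsilon^2$ on the localized directions, so the bare bound $q_{int}[\psi^\varepsilon]\ge\lambda_1^{\varepsilon,\ell}\|\psi^\varepsilon\|_{L^2(\Gamma_{int})}^2$ is useless and one must use the orthogonality. Expanding $\psi^\varepsilon\upharpoonright_{\Gamma_{int}}$ in the eigenbasis of $H_{int}^{\varepsilon,\ell}$ gives $q_{int}[\psi^\varepsilon]\ge\sum_{i:\,\lambda_i^{\varepsilon,\ell}<0}\lambda_i^{\varepsilon,\ell}|c_i|^2$, $c_i:=\langle\psi^\varepsilon,\psi_i^{\varepsilon,\ell}\rangle_{L^2(\Gamma_{int})}$. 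For $i\le k$, by \cite{grieser08a} $\psi_i^{\varepsilon,\ell}$ differs from the (normalised) restriction of the true bound state $\phi_i^\varepsilon$ by $\mathrm{poly}(\varepsilon^{-1})\,e^{-cL}$ in $L^2(\Gamma_{int})$, and $\phi_i^\varepsilon$ decays like $e^{-\kappa_i^\varepsilon x}$ with $\kappa_i^\varepsilon\sim 1/\varepsilon$, hence is $\mathcal{O}(e^{-L})$-small on $\Gamma_{out}^{\varepsilon,\ell}$; together with $\psi^\varepsilon\perp\phi_i^\varepsilon$ this forces $|c_i|\le\mathrm{poly}(\varepsilon^{-1})\,e^{-cL}$. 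Choosing $\ell=\ell(\varepsilon)$ with $\varepsilon\ll\ell\ll 1$ and $L=\ell/\varepsilon$ large enough to swallow these corrections — concretely $\ell=\sqrt{\varepsilon}$, $L=\varepsilon^{-1/2}$ — gives $q_{int}[\psi^\varepsilon]\ge -o(\varepsilon^2)$ in the gap case, hence $q_{out}[\psi^\varepsilon]\le E+o(1)$ and, from the display, $\sum_{m\ge2}\|\Psi_m^\varepsilon\|_{L^2}^2\le C\varepsilon^2$. In the resonant-sequence case there is one further negative eigenvalue $\lambda_{k+1}^{\varepsilon,\ell}$ with no associated orthogonality constraint; I would deal with it by first subtracting from $\psi^\varepsilon$ its component along a suitable waveguide extension of the resonant state and then running the argument above on the remainder, using that $\lambda_{k+1}^{\varepsilon,\ell}$ is small and that the resonant state carries negligible mass on the modes $m\ge 2$.

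The remaining two bounds are then comparatively cheap. From the display one also reads off the crude estimate $\sum_{m\ge2}\|(\Psi_m^\varepsilon)'\|_{L^2}^2\le q_{out}[\psi^\varepsilon]\le O(1)$. The $L^\infty$ bound follows from the one-dimensional Gagliardo--Nirenberg inequality $\|f\|_{L^\infty}^2\le 2\|f\|_{L^2}\|f'\|_{L^2}$ applied on each half-line and summed over $m\ge 2$ with the Cauchy--Schwarz inequality:
\[
\sum_{m\ge2}\|\Psi_m^\varepsilon\|_{L^\infty(\Gamma)}^2\le 2\Big(\sum_{m\ge2}\|\Psi_m^\varepsilon\|_{L^2}^2\Big)^{1/2}\Big(\sum_{m\ge2}\|(\Psi_m^\varepsilon)'\|_{L^2}^2\Big)^{1/2}\le C'\varepsilon .
\]
For the $W^{1,2}$ bound the crude derivative estimate is not enough; I would upgrade it to $\sum_{m\ge2}\|(\Psi_m^\varepsilon)'\|_{L^2}^2\le C'\varepsilon$ using $\|H^\varepsilon\psi^\varepsilon\|\le E$ and an IMS-type localization. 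Re-run the first part of the argument also for the slightly enlarged region $\{x\ge\ell/2\}$ (only the constants change), set $\psi_\perp:=(1-P_1^y)\psi^\varepsilon$ (the projection onto the transverse modes $m\ge 2$, which commutes with $H^\varepsilon$ on the cylinders), and pick a smooth cutoff $\eta$ with $\eta\equiv 0$ for $x\le\ell/2$, $\eta\equiv 1$ for $x\ge\ell$, $|\partial_x\eta|\le C/\ell$. The identity $\langle\eta\psi_\perp,H^\varepsilon(\eta\psi_\perp)\rangle=\mathrm{Re}\langle\eta^2\psi_\perp,H^\varepsilon\psi^\varepsilon\rangle+\||\partial_x\eta|\,\psi_\perp\|^2$ together with $\langle\eta\psi_\perp,H^\varepsilon(\eta\psi_\perp)\rangle\ge\|\partial_x(\eta\psi_\perp)\|^2$ (the transverse part of the form being nonnegative on the modes $m\ge 2$) gives
\[
\sum_{m\ge2}\|(\Psi_m^\varepsilon)'\|_{L^2(\Gamma_{out})}^2\le\|\partial_x(\eta\psi_\perp)\|^2\le E\,\|\psi_\perp\|_{L^2}+\frac{C}{\ell^2}\|\psi_\perp\|_{L^2}^2\le E\,C^{1/2}\varepsilon+\frac{C'}{L^2},
\]
where the first estimate supplied $\|\psi_\perp\|_{L^2}^2=\sum_{m\ge2}\|\Psi_m^\varepsilon\|_{L^2}^2\le C\varepsilon^2$ and $\ell=L\varepsilon$. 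With $L=\varepsilon^{-1/2}$ the last term is $O(\varepsilon)$, so $\sum_{m\ge2}\|(\Psi_m^\varepsilon)'\|_{L^2}^2\le C'\varepsilon$, and adding the $L^2$ estimate closes $\sum_{m\ge2}\|\Psi_m^\varepsilon\|_{W^{1,2}(\Gamma)}^2\le C'\varepsilon$. The two delicate points are the choice of $\ell=\ell(\varepsilon)$ — it must at once beat the exponentially small interior corrections, keep the $1/\ell^2$ cutoff cost below $\varepsilon$, and remain $\ll 1$ — and the transfer of the orthogonality from the true bound states to the eigenfunctions of the auxiliary operator, for which \cite{grieser08a} is essential.
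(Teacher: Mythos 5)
Your proposal is correct, and its first half --- the $L^2$ bound --- is essentially the paper's argument: split the quadratic form over $\Gamma_{int}^{\varepsilon,\ell}$ and $\Gamma_{out}^{\varepsilon,\ell}$, use (quasi-)orthogonality to the bound states to make the interior form nonnegative up to errors, and read the $\varepsilon^2$ decay of the higher modes off the transverse gap $(\mu_2-\mu_1)/\varepsilon^2$ in the exterior form. You are in fact more careful than the paper on two points: you make explicit how the orthogonality to $\phi_i^\varepsilon$ is transferred to the eigenfunctions $\psi_i^{\varepsilon,\ell}$ of the auxiliary operator via the exponential estimates of Grieser, and you flag that in the resonant-sequence case the eigenvalue $\lambda_{k+1}^{\varepsilon,\ell}$ is negative with no orthogonality constraint attached, so it must be subtracted before the positivity argument runs --- a point the paper's proof passes over silently (its hypothesis ``no eigenvalues below $\mu_1^\varepsilon$ other than the bound states'' covers only the gap case). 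Where you genuinely diverge is the derivative estimate: the paper repeats the form argument with $\Vert H^\varepsilon\psi^\varepsilon\Vert\le E$ to bound $\sum_m\Vert\Delta\Psi_m^\varepsilon\Vert^2$ and then interpolates ($\Vert f'\Vert^2\le\Vert f\Vert\,\Vert f''\Vert$) against the $\mathcal{O}(\varepsilon^2)$ $L^2$ bound to get $\mathcal{O}(\varepsilon)$ for $\sum_m\Vert\nabla\Psi_m^\varepsilon\Vert^2$, whereas you use an IMS localization with a cutoff at scale $\ell$, paying a $C/\ell^2\cdot\Vert\psi_\perp\Vert^2=\mathcal{O}(L^{-2})$ price that forces the choice $L\gtrsim\varepsilon^{-1/2}$. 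Both routes work; the paper's interpolation is cleaner and puts no constraint on $\ell$ beyond $\varepsilon\ll\ell\ll1$, while your localization avoids manipulating second derivatives of the $\Psi_m^\varepsilon$ and, as a bonus, your direct Gagliardo--Nirenberg step obtains the $L^\infty$ bound from the crude $\mathcal{O}(1)$ derivative estimate alone, without waiting for the refined $W^{1,2}$ bound.
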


In other words, the components of the wavefunction $\psi^{\varepsilon}$ along the higher transverse modes are suppressed in the limit $\varepsilon \rightarrow 0$: the only component which survives is ``frozen'' in the first transverse mode $\chi_1^\varepsilon$.

\begin{proof}
The first $k$ eigenvalues $ \lambda_1^{\varepsilon, \ell} \ldots \lambda_k^{\varepsilon, \ell} $ of the auxiliary Hamiltonian $ H_{int}^{\varepsilon, \ell} $ converge to the bound states $ \lambda_1^\varepsilon \ldots \lambda_k^\varepsilon $ in the sense explained above: the difference is $ \mathcal{O} ( \varepsilon^{-2} e^{-L} ) $ (we will loosely refer to them as ``bound states'' too); the eigenvectors $ \psi_1^{\varepsilon, \ell} \ldots \psi_k^{\varepsilon, \ell} $ converge uniformly, with all derivatives, to the respective bound states $ \phi_1^\varepsilon \ldots \phi_k^\varepsilon $ of $ \Gamma^\varepsilon $ restricted to $ \Gamma_{int}^{\varepsilon, \ell} $. It follows that if $ \psi $ is orthogonal to $ \phi_1^\varepsilon \ldots \phi_k^\varepsilon $, its restriction to $ \Gamma_{int}^{\varepsilon, \ell} $ will be almost orthogonal to $ \psi_1^{\varepsilon, \ell} \ldots \psi_k^{\varepsilon, \ell} $, up to an error that is $ \mathcal{O} ( \varepsilon^{-2} e^{-L} ) $.

In the following we will just say that $ \psi $ (restricted to $ \Gamma_{int}^{\varepsilon, \ell} $) is almost orthogonal, or quasi-orthogonal, to $ \psi_1^{\varepsilon, \ell} \ldots \psi_k^{\varepsilon, \ell} $, without writing down the exponentially small error: this is because we will obtain estimates that are polynomial in $ \varepsilon $ and $ \ell $, so that the exponentially small error is easily reabsorbed; and this will help a lot in keeping the formulas clean.

If the Laplacian on $ \Gamma_{int}^{\varepsilon, \ell} $ has no eigenvalues below $ \mu_1^\varepsilon $ other than the ``bound states'', a standard spectral argument implies that
\[ \int_{ \Gamma_{int}^{\varepsilon, \ell} } \left( | \nabla \psi^\varepsilon |^2 - \mu_1^\varepsilon | \psi^\varepsilon |^2 \right) \geq 0 \; . \]
Moreover, by substituting the Fourier expansion of $ \psi^\varepsilon \upharpoonright \Gamma_{out}^{\varepsilon, \ell} $ we easily see that
\[ \int_{ \Gamma_{out}^{\varepsilon, \ell} } \left( | \nabla \psi^\varepsilon |^2 - \mu_1^\varepsilon | \psi^\varepsilon |^2 \right) \geq 0 \; . \]
Hence, both these contributions are positive: from the energy bound on $ \psi^\varepsilon $ we know that their sum $ \int_{ \Gamma^\varepsilon } \left( | \nabla \psi^\varepsilon |^2 - \mu_1^\varepsilon | \psi^\varepsilon |^2 \right) $ is positive and smaller than $E$, and so each one of them (in particular, the integral over $ \Gamma_{out} $) is bounded by $E$. Notice that if the integral over $ \Gamma_{int} $ had been large and negative, the integral over $ \Gamma_{out} $ could have been arbitrarily large; this is why we must avoid bound states.

After this simple, but important observation, the first part of the theorem easily follows by substituting the Fourier decomposition of $ \psi^{\varepsilon} \upharpoonright { \Gamma_{out}^{\varepsilon, \ell} } $ in the integral. We obtain
\[ \int_{ \Gamma_{out}^{\varepsilon, \ell} } | \psi^{\varepsilon} - \left( \Psi_1^\varepsilon \otimes \chi_1^\varepsilon \right) |^2 \leq \left( \frac{E}{\mu_2 - \mu_1} \int_{ \Gamma^\varepsilon } | \psi^{\varepsilon} |^2 \right) \cdot \varepsilon^2 \; . \]
To complete the theorem, note that $ \Vert H^\varepsilon \psi^\varepsilon \Vert^2 \leq E^2 $. Repeating the previous argument, with simple modifications, one proves that the sum of the norm squared of $ \Delta \Psi^\varepsilon_m $ is bounded: by interpolation, the sum of the squared norms of $ \nabla \Psi^\varepsilon_m $, for $ m \geq 2 $ is $ \mathcal{O} (\varepsilon) $. This implies that the $ W^{1, 2} $-norm of $ \Psi^\varepsilon_m $, $ m \geq 2 $, is $ \mathcal{O} (\varepsilon^{1/2}) $: we recall that this implies that the $ L^\infty $-norm is infinitesimal too, with the same bounds.
\end{proof}

\begin{remark}
Consider the expectation value of an observable, which is either given by the multiplication by a smooth function supported in $ \Gamma_{out}^{\varepsilon, \ell} $, or a linear differential operator in the longitudinal derivatives $ \partial / \partial x_j $, $ j = 1 \ldots n $ with coefficients in the smooth functions supported in $ \Gamma_{out}^{\varepsilon, \ell} $. Theorem \ref{thm:MultiscaleDecoupling} tells us that the expectation values on $ \psi^\varepsilon $ of these observables converge for small $ \varepsilon $ to the expectation values on $ \Psi_1^\varepsilon $ of the corresponding observables restricted to $ \Gamma $. Consider for instance a function $ \phi $ on $ \Gamma^\varepsilon $ (vanishing on $ \Gamma_{int}^{\varepsilon, \ell} $): and say that we want to compute
\[ \int_{ \Gamma^\varepsilon } \phi \cdot | \psi^\varepsilon |^2, \quad \int_{\Gamma^\varepsilon} \phi \cdot ( \psi^\varepsilon )^* \partial_x \psi^\varepsilon. \]
By Theorem \ref{thm:MultiscaleDecoupling}, it is easy to see that apart from an error term vanishing as a suitable power of $ \varepsilon $, these expectation values respectively converge to
\[ \int_{ \Gamma } \phi_\Gamma \cdot | \Psi_1^\varepsilon |^2, \quad \int_{ \Gamma } \phi_\Gamma \cdot ( \Psi_1^\varepsilon )^* \partial_x \Psi_1^\varepsilon, \]
where $ \phi_\Gamma $ is the evaluation of $ \phi $ on $ \Gamma $. Notice that $ \Vert \chi_1^\varepsilon \Vert^2 = 1 \; \forall \varepsilon $.
\end{remark}

The last remark is crucial: it allows us to use $ \Psi_1^\varepsilon $ instead of $ \psi^\varepsilon $ to compute the expectation values of important observables, such as the density and (longitudinal) current of particles. Henceforth, we will study the fundamental Fourier component $ \Psi_1^\varepsilon $, and systematically neglect the higher-order contributions to $ \psi^\varepsilon $, which do not give contribution to the observables of interest.

It is desirable to determine an effective evolution equation for $ \Psi_1^\varepsilon $ in the limit $ \varepsilon \to 0 $. Notice that $ \Psi_{1, j}^\varepsilon $, defined as
\[ \Psi_{1, j}^\varepsilon (x) = \int_{ \Sigma_j^\varepsilon } \psi^\varepsilon (x, y) \cdot \chi_1^\varepsilon (y) \, \mathsf{d}y \]
is measurable, as its time and space derivatives; the $ L^2 $-norms of $\Psi_1^\varepsilon$ and of $\nabla \Psi_1^\varepsilon$ are bounded, hence $\Psi_1^\varepsilon$ is in $ W^{1, 2} (\Gamma) $. It is easy to see that $ \Psi_{1, j}^\varepsilon $ satisfies the free (one-dimensional) Schr{\"o}dinger equation, in weak sense. Indeed, for any smooth test function $\varphi$ with support in $ \Gamma_{out}^{\varepsilon, \ell} $, and of the form $ \Phi \otimes \chi_1^\varepsilon $ the equations of motion read
\[ i \frac{\mathsf{d}}{\mathsf{d}t} \langle \varphi, \psi^\varepsilon \rangle = \langle \nabla \varphi, \nabla \psi^\varepsilon \rangle - \mu_1^\varepsilon \langle \varphi, \psi^\varepsilon \rangle \]
and after the integration on the transverse variable the equation reduces to
\[ i \frac{\mathsf{d}}{\mathsf{d}t} \langle \Phi, \Psi_1^\varepsilon \rangle = \langle \nabla \Phi, \nabla \Psi_1^\varepsilon \rangle \; . \]
This equation is the weak form of the free evolution of $ \Psi_1^\varepsilon $ along the edges of the graph. Since the support of $ \Phi $ is at a distance not smaller than $ \ell $ from the vertex, the equation is satisfied \textit{outside} the vertex. By a compactness argument, there exists a subsequence $ \lbrace \Psi_1^{\varepsilon_n} \rbrace_{ n \in \mathbb{N} } $ having a weak limit $ \Psi \in W^{1, 2}(\Gamma) $, which must satisfy the weak Schr{\"o}dinger equation on the edges of $ \Gamma $ as well (recall that the initial datum is the same for all values of $ \varepsilon $!). $ \Psi $ is our limit wavefunction on $ \Gamma $: once we have proven that the unitary evolution of $ \Psi $ is uniquely determined, it will be \textit{a posteriori} clear that any subsequence, and hence the whole sequence $ \lbrace \Psi_1^\varepsilon \rbrace $ converges to the same $ \Psi $. Our goal is to use $ \Psi $, instead of $ \Psi_1^\varepsilon $, to approximate the expectation values of $ \psi^\varepsilon $. The next section is devoted to determining the time evolution of $ \Psi $, which depends crucially upon the behaviour of $ \Psi $ at the vertex of $ \Gamma $.

\section{The limit motion on the graph}

In the following, we will show how one can  approximate the low-energy dynamics of the dynamical system $ (\Gamma^\varepsilon, H^\varepsilon) $ with a suitable dynamics $ H_\Gamma $ on the metric graph $ \Gamma $. The cornerstone of our approach is the study of the spectrum in the mesoscopic region $ \Gamma_{int}^{\varepsilon, \ell} $. The limit dynamics is uniquely identified in this section; more quantitative results (convergence of resolvents and of wave operators) are dealt with in the next sections.

For simplicity, we only treat the case with no external potential. The method can be used also in case there is a continuous potential $ V(x), \; x \in \mathbb{R}^3 $; we shall denote by $ V_\Gamma $ its restriction to the graph. We suggest that such perturbations can be treated by approximating the wave operators $ W^\pm (H^\varepsilon + V, H^\varepsilon) $ on the waveguide $ \Gamma^\varepsilon $ with their analogues $ W^\pm (H_\Gamma + V_\Gamma, H_\Gamma) $ on $ \Gamma $: this is sketched in the last section.

\subsection*{The method of approximating (zero-energy) resonant sequences}

We have a sequence of star-shaped waveguides $ \lbrace \Gamma^\varepsilon \rbrace $ (``$\varepsilon$-fat'' star graphs) which shrink to a star-shaped graph $ \Gamma $ as $ \varepsilon \to 0 $. The method we describe allows the determination of the boundary conditions at the vertex of $ \Gamma $, \textit{which are needed in order to define the effective dynamics on the graph} that, for small $ \varepsilon $, approximates the low-energy dynamics generated by $ H^\varepsilon $. These boundary conditions depend on the shape of $ \Gamma^\varepsilon $ in the neighborhood of the vertex region $ \Omega^\varepsilon $. The dependence is through the spectra of the auxiliary operators $ H_{int}^{\varepsilon, \ell} $ acting on the mesoscopic region $ \Gamma_{int}^{\varepsilon, \ell} $.

We will see in the next paragraph that the control of the effective wavefunction $ \Psi $ on the limit graph $ \Gamma $ near the vertex is possible if we consider a particular class of initial states (low-energy states) and choose wisely the parameters $ \varepsilon $, $L$. But once we restrict to this class of initial states, the gluing conditions at the vertex of the graph depend on the geometry of the waveguide and nothing else: the dependence is through the asymptotics (in the parameter $L$) of the eigenvalues of the rescaled mesoscopic region.

\subsection*{The wavefunctions near the vertex}

Let $ \psi^\varepsilon \in L^2 (\Gamma^\varepsilon) $, smooth, normalized, energy-bounded state orthogonal to the bound states of $\Gamma^\varepsilon$. Consider the finite cylinders
\[ B_j^{\varepsilon, \ell} = \Sigma_j^\varepsilon \times [0, \ell] \quad j=1 \ldots n \]
contained in the mesoscopic region: $ \lbrace \ell \rbrace \times \Sigma_j^\varepsilon $ is part of the boundary of the mesoscopic region. Notice that (up to a zero-measure boundary)
\[ (\cup_j B_j^{\varepsilon, \ell} ) \cap \Omega^{\varepsilon} = \emptyset \; . \]

The function $ \psi^\varepsilon \upharpoonright { B_j^{\varepsilon, \ell} } $ can be Fourier decomposed as usual:
\[ \psi^\varepsilon \upharpoonright_{ B_j^{\varepsilon, \ell} } = \left( \Psi_{1, j}^\varepsilon \otimes \chi_1^\varepsilon \right) \upharpoonright_{B_j} + R^{\varepsilon}_{1,j}  \; . \]
We can prove the following

\begin{theorem} \label{thm:GettingDirichletConditions}
Assume that the spectral gap condition holds. If $\psi^\varepsilon $ belongs to $ P_{ (0, E] } L^2 (\Gamma^\varepsilon) $, then the $L^\infty$-norm of
\[ \Psi^\varepsilon = \Psi_{1, j}^\varepsilon \upharpoonright_{[0, \ell]} \]
is bounded by $ C_1 \cdot \ell^{1/2} + C_2 \cdot \ell^{1/2} L^{-(2-\gamma)/2} $, where $ C_1 $, $ C_2 $ are numerical constants.\\
\theoremend
\end{theorem}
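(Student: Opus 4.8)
The plan is to bound the first longitudinal mode $\Psi^\varepsilon=\Psi_{1,j}^\varepsilon\upharpoonright_{[0,\ell]}$ by combining two estimates: a smallness bound for $\|\Psi^\varepsilon\|_{L^2([0,\ell])}$ (this is where the spectral gap condition enters) and a uniform bound for $\|(\Psi^\varepsilon)'\|_{L^2([0,\ell])}$; the $L^\infty$-bound then follows from the elementary one-dimensional inequality $\|f\|_{L^\infty([0,\ell])}\le \ell^{-1/2}\|f\|_{L^2([0,\ell])}+\ell^{1/2}\|f'\|_{L^2([0,\ell])}$ (write $f(x)=f(y)+\int_y^x f'$, estimate $\int|f'|$ by Cauchy--Schwarz, and average over $y\in[0,\ell]$).

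For the first estimate, restrict $\psi^\varepsilon$ to $\Gamma_{int}^{\varepsilon,\ell}$. It lies in the form domain of $H_{int}^{\varepsilon,\ell}$ (its only boundary constraint is Dirichlet on $\partial\Gamma^\varepsilon$, which is weaker than the mixed conditions defining $H_{int}^{\varepsilon,\ell}$), and by the facts recalled before the theorem --- the eigenfunctions $\psi_1^{\varepsilon,\ell},\dots,\psi_k^{\varepsilon,\ell}$ converge to the bound states $\phi_1^\varepsilon,\dots,\phi_k^\varepsilon$ up to $\mathcal{O}(\varepsilon^{-2}e^{-L})$ --- the restriction $\psi^\varepsilon\upharpoonright_{\Gamma_{int}^{\varepsilon,\ell}}$ is quasi-orthogonal to $\psi_1^{\varepsilon,\ell},\dots,\psi_k^{\varepsilon,\ell}$. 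Since the next eigenvalue obeys the spectral gap condition, min-max gives
\[ \int_{\Gamma_{int}^{\varepsilon,\ell}}\left(|\nabla\psi^\varepsilon|^2-\mu_1^\varepsilon|\psi^\varepsilon|^2\right)\ \ge\ \lambda_{k+1}^{\varepsilon,\ell}\,\|\psi^\varepsilon\|_{L^2(\Gamma_{int}^{\varepsilon,\ell})}^2\ \ge\ \frac{\mu_0}{\varepsilon^2 L^\gamma}\,\|\psi^\varepsilon\|_{L^2(\Gamma_{int}^{\varepsilon,\ell})}^2 \]
(the exponentially small error being harmless against the polynomial bounds). On the other hand, expanding $\psi^\varepsilon\upharpoonright_{\Gamma_{out}^{\varepsilon,\ell}}$ in Fourier modes exactly as in the proof of Theorem \ref{thm:MultiscaleDecoupling} shows the $\Gamma_{out}$-contribution to the energy is non-negative, so the left-hand side above is bounded by $\langle\psi^\varepsilon,H^\varepsilon\psi^\varepsilon\rangle\le E$. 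Hence $\|\psi^\varepsilon\|_{L^2(\Gamma_{int}^{\varepsilon,\ell})}^2\le (E/\mu_0)\,\varepsilon^2 L^\gamma=(E/\mu_0)\,\ell^2 L^{\gamma-2}$ (using $\varepsilon=\ell/L$); and since $\|\Psi^\varepsilon\|_{L^2([0,\ell])}\le\|\psi^\varepsilon\|_{L^2(B_j^{\varepsilon,\ell})}\le\|\psi^\varepsilon\|_{L^2(\Gamma_{int}^{\varepsilon,\ell})}$ by $\|\chi_1^\varepsilon\|=1$ and Cauchy--Schwarz, we get $\ell^{-1/2}\|\Psi^\varepsilon\|_{L^2([0,\ell])}\le\sqrt{E/\mu_0}\,\ell^{1/2}L^{-(2-\gamma)/2}$, which is the second term in the claimed bound.

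For the derivative one cannot just use the quadratic form over $\Gamma_{int}^{\varepsilon,\ell}$, since that controls $\|\nabla\psi^\varepsilon\|_{L^2(\Gamma_{int})}^2$ only up to $\mathcal{O}(L^\gamma)$, because of the possibly large negative contribution of $\Omega^\varepsilon$ to the form; instead one uses the stronger bound $\|H^\varepsilon\psi^\varepsilon\|\le E$. On the open branch $\Sigma_j^\varepsilon\times(0,\infty)$ the operator acts as $-\Delta-\mu_1^\varepsilon$, so projecting onto $\chi_1^\varepsilon$ the first mode $u:=\Psi_{1,j}^\varepsilon$ solves $-u''=(H^\varepsilon\psi^\varepsilon)_{1,j}$ on $(0,\infty)$, with $\|u''\|_{L^2(0,\infty)}\le\|H^\varepsilon\psi^\varepsilon\|\le E$ and $\|u\|_{L^2(0,\infty)}\le\|\psi^\varepsilon\|=1$. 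The half-line interpolation inequality $\|u'\|_{L^2(0,\infty)}\le(1+\sqrt2)\,\|u\|_{L^2(0,\infty)}^{1/2}\|u''\|_{L^2(0,\infty)}^{1/2}$ --- obtained by integrating $\int (u')^2$ by parts and controlling the boundary term $|u(0)\,\overline{u'(0)}|$ via $\|v\|_{L^\infty(0,\infty)}^2\le 2\|v\|_{L^2}\|v'\|_{L^2}$ applied to $v=u$ and $v=u'$ --- then gives $\|(\Psi^\varepsilon)'\|_{L^2([0,\ell])}\le\|u'\|_{L^2(0,\infty)}\le(1+\sqrt2)\sqrt E$, whence $\ell^{1/2}\|(\Psi^\varepsilon)'\|_{L^2([0,\ell])}\le(1+\sqrt2)\sqrt E\,\ell^{1/2}$, the first term. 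Combining the two estimates through the one-dimensional Sobolev inequality yields the theorem with $C_1=(1+\sqrt2)\sqrt E$ and $C_2=\sqrt{E/\mu_0}$.

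The main obstacle is the quasi-orthogonality step: one must quantify, uniformly in $\varepsilon$, how close $\psi^\varepsilon\upharpoonright_{\Gamma_{int}^{\varepsilon,\ell}}$ is to the orthogonal complement of the low-lying eigenfunctions of $H_{int}^{\varepsilon,\ell}$, which relies on the eigenfunction-convergence results of \cite{grieser08a} and on verifying that the $\mathcal{O}(\varepsilon^{-2}e^{-L})$ errors are absorbed by the polynomial terms. The rest is either soft (the energy decomposition $\langle\psi^\varepsilon,H^\varepsilon\psi^\varepsilon\rangle=q_{int}+q_{out}$ with $q_{out}\ge 0$) or elementary (the two Sobolev-type inequalities).
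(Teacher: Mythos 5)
Your argument is correct, and its skeleton coincides with the paper's: an $L^2$-smallness estimate for $\psi^\varepsilon$ on $\Gamma_{int}^{\varepsilon,\ell}$ obtained from the spectral gap condition, quasi-orthogonality to the low-lying eigenfunctions of $H_{int}^{\varepsilon,\ell}$ and positivity of the quadratic form on $\Gamma_{out}^{\varepsilon,\ell}$ (this part is essentially identical to the paper's, including the constant $\sqrt{E/\mu_0}$ in the second term), combined with a uniform $L^2$-bound on $(\Psi^\varepsilon)'$ through a one-dimensional Poincar\'e/Sobolev inequality on $[0,\ell]$ (your averaged form of this inequality is equivalent to the paper's ``Poincar\'e-like estimate''). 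Where you genuinely diverge is the gradient bound. The paper localizes the quadratic form to the cylinder $B_j^{\varepsilon,\ell}$ and asserts $\Vert (\Psi^\varepsilon)'\Vert_{L^2}^2 \le \int_{B_j}\left(|\nabla\psi^\varepsilon|^2-\mu_1^\varepsilon|\psi^\varepsilon|^2\right)\le E$; the second inequality is left implicit and, as you correctly flag, it is delicate, because the junction $\Omega^\varepsilon$ can contribute to the form a negative term of order $\mu_1^\varepsilon\int_{\Omega^\varepsilon}|\psi^\varepsilon|^2=\mathcal{O}(E L^\gamma)$, which is not uniformly bounded. Your alternative --- use $\Vert H^\varepsilon\psi^\varepsilon\Vert\le E$, observe that the first transverse mode satisfies $-u''=(H^\varepsilon\psi^\varepsilon)_{1,j}$ on the open half-branch, and apply the Landau-type interpolation $\Vert u'\Vert^2\le(1+\sqrt2)^2\Vert u\Vert\,\Vert u''\Vert$ on $(0,\infty)$ (your derivation of the constant via the boundary term is correct) --- yields $\Vert(\Psi^\varepsilon)'\Vert_{L^2}\le(1+\sqrt2)\sqrt E$ uniformly and independently of the junction, at the modest price of invoking $\Vert H^\varepsilon\psi^\varepsilon\Vert\le E$, which the paper already uses in the proof of Theorem \ref{thm:MultiscaleDecoupling}. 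In short, the paper's route is shorter and stays entirely within quadratic-form estimates, while yours buys a rigorous, junction-independent derivative bound and in fact patches the one soft spot in the published argument.
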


An important consequence of this theorem is the following. Let $ \Psi $ be the weak limit of any convergent subsequence $ \lbrace \Psi_1^{\varepsilon_n} \rbrace $. It is in $ W^{1, 2} (\Gamma \backslash \lbrace V \rbrace) $, therefore each component $ \Psi_j $, $ j = 1 \ldots n $ is right-continuous at the vertex $V$. Since its approximants are uniformly bounded near the vertex by an infinitesimal quantity, it follows that
\[ \lim_{ x \to 0^+ } \Psi_j (x) = 0 \qquad j = 1 \ldots n \; . \]

\begin{proof}
$\psi^\varepsilon$ is a state with energy bounded by $E$. Application of the spectral condition, together with (quasi)-orthogonality to the negative-energy states of $ H_{int}^{\varepsilon, \ell} $\footnote{we recall that these states approximate the bound states of $ \Gamma^\varepsilon $ up to an exponentially small error, see the Definition of $ H_{int}^{\varepsilon, \ell} $. We omit the exponentially small errors from the formulas, to make them easier to read.}, tells us that
\[ \int_{\Gamma_{int}} | \nabla \psi^\varepsilon |^2 \geq \left[ \mu_1^\varepsilon + \frac{\mu_0}{\varepsilon^2} \left( \frac{\varepsilon}{\ell} \right)^\gamma \right] \int_{\Gamma_{int}} | \psi^\varepsilon |^2, \qquad \mu_0 > 0, \; \gamma > 0 \; . \]
This inequality can be written in a more useful form:
\[ \frac{\mu_0}{\varepsilon^2} \left( \frac{\varepsilon}{\ell} \right)^\gamma \int_{\Gamma_{int}} | \psi^\varepsilon |^2 \leq \int_{\Gamma_{int}} \left( | \nabla \psi^\varepsilon |^2 - \mu_1^\varepsilon | \psi^\varepsilon |^2 \right) \; . \]
Now, we recall that
\[ \int_{\Gamma_{out}} \left( | \nabla \psi^\varepsilon |^2 - \mu_1^\varepsilon | \psi^\varepsilon |^2 \right) \geq 0 \]
(this is easily seen by expanding $\psi^\varepsilon$ in Fourier series) and so
\[ \int_{\Gamma_{int}} \left( | \nabla \psi^\varepsilon |^2 - \mu_1^\varepsilon | \psi^\varepsilon |^2 \right) \leq \int_{\Gamma^\varepsilon} \left( | \nabla \psi^\varepsilon |^2 - \mu_1^\varepsilon | \psi^\varepsilon |^2 \right) \leq E \]
from which we deduce
\[ \int_{\Gamma_{int}^{\varepsilon, \ell}} | \psi^\varepsilon |^2 \leq \frac{E}{\mu_\gamma} \varepsilon^2 \left( \frac{\ell}{\varepsilon} \right)^\gamma \; . \]
This states that the $L^2$ norm of $\psi \upharpoonright_{\Gamma_{int}}$ (and hence of $\psi \upharpoonright_{B_j}$) is bounded by a suitable power of $\varepsilon$ (not uniformly in  $E$).

For any smooth function $f$ on the interval $[a, b]$ and any point $x_0 \in [a, b]$, the following Poincar\'e-like estimate holds:
\[ (b-a) \left[ | f(x_0) | - \Vert \nabla f \Vert_{L^2} \sqrt{b-a} \right]^2 \leq \int_a^b | f(x) |^2 \mathsf{d}x \; . \]

Consider the restriction of $\psi^\varepsilon$ to $B_j^{\varepsilon, \ell}$: we write
\[ \psi^\varepsilon \upharpoonright_{ B_j^{\varepsilon, \ell} } = \left( \Psi_{1, j}^\varepsilon \otimes \chi_1^\varepsilon \right) \upharpoonright_{ B_j^{\varepsilon, \ell} } + R_{1, j}^\varepsilon \equiv \Psi^\varepsilon \otimes \chi_1^\varepsilon + R_{1, j}^\varepsilon \; . \]

The remainder $ R_{1, j}^\varepsilon $ is the sum of the components of $\psi^\varepsilon$ along the higher transverse modes. From the Poincar\'e estimate applied to $\Psi^\varepsilon$,
\[ \ell \left[ | \Psi^\varepsilon (x_0) | - \Vert \nabla \Psi^\varepsilon \Vert_{L^2} \cdot \ell^{1/2} \right]^2 \leq \int_0^\ell | \Psi^\varepsilon |^2 \leq \int_{B_j} | \psi^\varepsilon |^2 \; ; \]
Since
\[ \Vert \nabla \Psi^\varepsilon \Vert^2_{L^2} \leq \int_{ B_j^{\varepsilon, \ell} } \left( | \nabla \psi^\varepsilon |^2 - \mu_1^\varepsilon | \psi^\varepsilon |^2 \right) \leq E \]
is finite, we obtain an estimate  for $ \sup | \Psi^\varepsilon | \equiv \Vert \Psi^\varepsilon \Vert_\infty $ in terms of $\ell$, $\varepsilon$:
\[ | \Psi^\varepsilon (x_0) | \leq E \cdot \ell^{1/2} + \sqrt{E/\mu_0} \cdot \varepsilon^{1/2} \left( \frac{1}{L} \right)^{\frac{1 - \gamma}{2}} \; . \]

Using $ \varepsilon^{1/2} = \ell^{1/2} \cdot L^{-1/2} $ this last equation can be rewritten as:
\[ | \Psi^\varepsilon (x_0) | \leq \ell^{1/2} \left\lbrace E + \sqrt{E/\mu_0} \cdot \left( \frac{1}{L} \right)^{ \frac{2 - \gamma}{2} } \right\rbrace \; . \]
If $ \gamma \leq 2 $, the term in curly brackets is bounded by $ E + \sqrt{E/\mu_\gamma} $ as $ \varepsilon \to 0 $, irrespective of the particular choice of the parameter $L$ that we have introduced. If $ \gamma > 2 $ then we need to restrict the choice of $L$, in order to bound $ | \Psi^\varepsilon (x_0) | $ with an infinitesimal quantity. In fact, by choosing $ L = \varepsilon^{- \alpha} $ for $ \alpha $ sufficiently close to $0$, the last equation becomes
\[ | \Psi^\varepsilon (x_0) | \leq E \cdot \varepsilon^{1 - \alpha} + \sqrt{E/\mu_0} \cdot \varepsilon^{1 - \alpha \gamma / 2} \]
and the right hand side clearly vanishes in the limit $ \varepsilon \to 0 $ for any value of $ \gamma $, if $ \alpha $ is suitably chosen.
\end{proof}

In the following it will be useful to know something about the restriction of $ \psi^\varepsilon $ to the boundary of $ \Gamma_{int}^{\varepsilon, \ell} $:

\begin{theorem}\label{thm:HigherModesOfPsiAtTheBoundary}
Assume that the spectral gap condition holds. If $\psi^\varepsilon $ belongs to $ P_{ (0, E] } L^2 (\Gamma^\varepsilon) $, then the $ L^2 $-norm of
\[ \left( \psi^\varepsilon - \Psi_1^\varepsilon \otimes \chi_1^\varepsilon \right) \upharpoonright_{ \partial \Gamma_{out}^{\varepsilon, \ell} } \]
is bounded by $ C_1 \cdot \ell^{1/2} + C_2 \cdot \ell^{1/2} L^{-(2-\gamma)/2} $, where $ C_1 $, $ C_2 $ are numerical constants. The constants are not necessarily the same of theorem \ref{thm:GettingDirichletConditions}.\\
\theoremend
\end{theorem}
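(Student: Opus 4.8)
The statement is a boundary (trace) estimate, and the plan is to reduce it, via the transverse Fourier expansion, to a one-dimensional trace inequality applied mode by mode, exactly in the spirit of Theorems \ref{thm:MultiscaleDecoupling} and \ref{thm:GettingDirichletConditions}. Note first that $\partial\Gamma_{out}^{\varepsilon,\ell}=\bigcup_{j=1}^n\{\ell\}\times\Sigma_j^\varepsilon$, and that on each cylindrical end $\Sigma_j^\varepsilon\times[\ell,+\infty)$ one has $\psi^\varepsilon=\sum_{m\ge1}\Psi^\varepsilon_{m,j}(x_j)\,\chi^\varepsilon_m(y_j)$, so that $\psi^\varepsilon-\Psi_1^\varepsilon\otimes\chi_1^\varepsilon=\sum_{m\ge2}\Psi^\varepsilon_{m,j}\otimes\chi^\varepsilon_m$; by orthonormality of the $\chi^\varepsilon_m$ on $\Sigma_j^\varepsilon$, evaluation at $x_j=\ell$ gives
\[ \bigl\Vert(\psi^\varepsilon-\Psi_1^\varepsilon\otimes\chi_1^\varepsilon)\upharpoonright_{\partial\Gamma_{out}^{\varepsilon,\ell}}\bigr\Vert_{L^2}^2=\sum_{j=1}^n\sum_{m\ge2}|\Psi^\varepsilon_{m,j}(\ell)|^2 . \]
So everything reduces to bounding $\sum_{j,m\ge2}|\Psi^\varepsilon_{m,j}(\ell)|^2$. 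For this I would use the one-dimensional trace inequality at the endpoint of a half-line: for $g\in W^{1,2}(\ell,+\infty)$,
\[ |g(\ell)|^2=-\int_\ell^\infty(|g|^2)'\,\le\,2\,\Vert g\Vert_{L^2(\ell,\infty)}\,\Vert g'\Vert_{L^2(\ell,\infty)}\,\le\,\ell^{-1}\Vert g\Vert_{L^2(\ell,\infty)}^2+\ell\,\Vert g'\Vert_{L^2(\ell,\infty)}^2 . \]

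Applying this to each $\Psi^\varepsilon_{m,j}$ and summing over $j$ and $m\ge2$ leaves two quantities to control. The first is the total $L^2$-mass of the higher modes over $\Gamma_{out}^{\varepsilon,\ell}$, $\sum_{j,m\ge2}\Vert\Psi^\varepsilon_{m,j}\Vert_{L^2(\ell,\infty)}^2=\sum_{m\ge2}\Vert\Psi^\varepsilon_m\Vert_{L^2(\Gamma)}^2$, which Theorem \ref{thm:MultiscaleDecoupling} bounds by $C\varepsilon^2$. The second is their longitudinal Dirichlet energy, $\sum_{j,m\ge2}\Vert(\Psi^\varepsilon_{m,j})'\Vert_{L^2(\ell,\infty)}^2$: expanding $\int_{\Gamma_{out}}(|\nabla\psi^\varepsilon|^2-\mu_1^\varepsilon|\psi^\varepsilon|^2)=\sum_{j,m\ge1}\int_\ell^\infty\bigl(|(\Psi^\varepsilon_{m,j})'|^2+(\mu_m^\varepsilon-\mu_1^\varepsilon)|\Psi^\varepsilon_{m,j}|^2\bigr)$ and using $\mu_m^\varepsilon\ge\mu_1^\varepsilon$, every summand is nonnegative, so this energy is $\le\int_{\Gamma_{out}}(|\nabla\psi^\varepsilon|^2-\mu_1^\varepsilon|\psi^\varepsilon|^2)$, which is in turn $\le\int_{\Gamma^\varepsilon}(|\nabla\psi^\varepsilon|^2-\mu_1^\varepsilon|\psi^\varepsilon|^2)=\langle\psi^\varepsilon,H^\varepsilon\psi^\varepsilon\rangle\le E$ once we know $\int_{\Gamma_{int}}(|\nabla\psi^\varepsilon|^2-\mu_1^\varepsilon|\psi^\varepsilon|^2)\ge0$ — and this is where the spectral gap hypothesis enters, via (quasi)orthogonality of $\psi^\varepsilon$ to the negative-energy states of $H^{\varepsilon,\ell}_{int}$ together with $\lambda^{\varepsilon,\ell}_{k+1}>0$, exactly as in the proof of Theorem \ref{thm:GettingDirichletConditions}. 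Putting the two estimates together gives $\sum_{j,m\ge2}|\Psi^\varepsilon_{m,j}(\ell)|^2\le\ell^{-1}C\varepsilon^2+E\ell$, and the scaling $\ell=\varepsilon L$ turns $\ell^{-1}\varepsilon^2$ into $\ell L^{-2}$, which is $\le\ell\,L^{-(2-\gamma)}$ since $\gamma\ge0$ and $L\ge1$. Taking square roots and using $\sqrt{a+b}\le\sqrt a+\sqrt b$ yields the claimed bound with $C_1=\sqrt E$ and $C_2=\sqrt C$; the exponentially small discrepancy between the bound states of $H^{\varepsilon,\ell}_{int}$ and those of $\Gamma^\varepsilon$ is absorbed into the constants as before.

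None of the steps is individually hard; the point that needs attention is the longitudinal Dirichlet energy of the higher modes, which must be routed through the positivity of $\int_{\Gamma_{int}}(|\nabla\psi^\varepsilon|^2-\mu_1^\varepsilon|\psi^\varepsilon|^2)$ — hence through the spectral gap condition — rather than by bounding $\int_{\Gamma_{out}}|\nabla\psi^\varepsilon|^2$ directly, which would cost a diverging factor $\mu_1^\varepsilon$. As an alternative, fully parallel to the proof of Theorem \ref{thm:GettingDirichletConditions}, one may instead evaluate the trace \emph{from inside}, on the cylinders $B_j^{\varepsilon,\ell}\subset\Gamma_{int}^{\varepsilon,\ell}$, using the Poincar\'e-type estimate there together with $\int_{\Gamma_{int}}|\psi^\varepsilon|^2\le(E/\mu_0)\varepsilon^2L^\gamma$; this route also yields the stated form — indeed it is what makes the exponent $(2-\gamma)/2$ natural — at the mild cost of having to check that the junction region $\Omega^\varepsilon$ does not spoil the longitudinal-energy bound on the $B_j^{\varepsilon,\ell}$.
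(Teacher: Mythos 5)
Your proof is correct and follows essentially the same route as the paper's (very terse) argument: a one-dimensional Poincar\'e/trace inequality on an interval of length $\ell$ applied mode by mode, combined with the $L^2$ bound on the sum of the higher modes from Theorem \ref{thm:MultiscaleDecoupling} and the longitudinal energy bound obtained from positivity of $\int_{\Gamma_{int}}(|\nabla\psi^\varepsilon|^2-\mu_1^\varepsilon|\psi^\varepsilon|^2)$ under the spectral gap condition. Whether one takes the trace from the half-line $[\ell,\infty)$ as you do, or from the finite cylinders $B_j^{\varepsilon,\ell}$ as in the proof of Theorem \ref{thm:GettingDirichletConditions} (the alternative you also note), is an inessential variation yielding the same form of the bound.
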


\begin{proof}
Consider \textit{the sum of} the higher modes $ \Psi_2^\varepsilon, \, \Psi_3^\varepsilon \dots $: using $\Psi_m^\varepsilon$ in the Poincar\'e-like estimate and summing up from $m=2$ to $\infty$, we get
\[ \ell \sum_{m=2}^\infty  \left( | \Psi_m^\varepsilon (x_0) | - \ell^{1/2} \int \Vert \nabla \Psi_m^\varepsilon \Vert \right)^2 \leq \sum_{m=2}^\infty \Vert \Psi_m^\varepsilon \Vert^2 \; . \]
(all norms are $L^2$-norms). We use the bounds obtained in Theorem \ref{thm:MultiscaleDecoupling} for the sum of the higher modes: the proof then mimics the proof of theorem \ref{thm:GettingDirichletConditions}.
\end{proof}

Another simple modification of Theorem \ref{thm:GettingDirichletConditions} allows to state that, if the resonant state condition holds:

\begin{theorem}\label{thm:RelaxOnResonantSequence}
Assume that the resonant sequence condition holds. If $\psi^\varepsilon $ belongs to $ P_{ (0, E] } L^2 (\Gamma^\varepsilon) $, define
\[ c_\Psi^\varepsilon \equiv \frac{ \langle \psi_{k+1}^{\varepsilon, \ell}, \psi^\varepsilon \rangle }{ \langle \psi_{k+1}^{\varepsilon, \ell}, \psi_{k+1}^{\varepsilon, \ell} \rangle } \; , \]
and
\[ \left( \psi^\varepsilon - c_\Psi^\varepsilon \cdot \psi_{k+1}^{\varepsilon, \ell} \right) \upharpoonright_{ B_j^{\varepsilon, \ell} } \equiv \tilde{\Psi}^\varepsilon \otimes \chi_1^\varepsilon + \tilde{R}_j^\varepsilon \; . \]
Then, the $ L^\infty $ norm of $ \tilde{\Psi}^\varepsilon $, restricted to the segment $ [0, \ell] $, is bounded by $ C_1 \cdot \ell^{1/2} + C_2 \cdot \ell^{1/2} L^{-(2-\gamma)/2} $, where $ C_1 $, $ C_2 $ are numerical constants. Recall that $ \lbrace \psi_{k+1}^{\varepsilon, \ell} \rbrace $ is the resonant sequence of $ \Gamma_{int}^{\varepsilon, \ell} $.\\
\theoremend
\end{theorem}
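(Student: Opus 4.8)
The plan is to rerun the proof of Theorem \ref{thm:GettingDirichletConditions} almost verbatim, with $ \psi^\varepsilon $ replaced throughout by the ``cleaned'' function $ \tilde\psi^\varepsilon \equiv \psi^\varepsilon - c_\Psi^\varepsilon\, \psi_{k+1}^{\varepsilon,\ell} $ on $ \Gamma_{int}^{\varepsilon,\ell} $, and with the role played there by the $k$ bound states of $ H_{int}^{\varepsilon,\ell} $ now played by the $ k+1 $ eigenfunctions $ \psi_1^{\varepsilon,\ell}, \dots, \psi_{k+1}^{\varepsilon,\ell} $. By construction $ \tilde\psi^\varepsilon $ is exactly orthogonal in $ L^2(\Gamma_{int}^{\varepsilon,\ell}) $ to the resonant state $ \psi_{k+1}^{\varepsilon,\ell} $; and since $ \psi^\varepsilon \in P_{(0,E]} L^2(\Gamma^\varepsilon) $ is orthogonal to the bound states of $ \Gamma^\varepsilon $, which $ \psi_1^{\varepsilon,\ell},\dots,\psi_k^{\varepsilon,\ell} $ approximate up to an exponentially small error, while $ \psi_{k+1}^{\varepsilon,\ell} $ is itself orthogonal to $ \psi_1^{\varepsilon,\ell},\dots,\psi_k^{\varepsilon,\ell} $, the function $ \tilde\psi^\varepsilon $ stays quasi-orthogonal to the whole span of $ \psi_1^{\varepsilon,\ell},\dots,\psi_{k+1}^{\varepsilon,\ell} $. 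The variational principle then gives
\[ \int_{\Gamma_{int}} \left( | \nabla\tilde\psi^\varepsilon |^2 - \mu_1^\varepsilon | \tilde\psi^\varepsilon |^2 \right) \geq \lambda_{k+2}^{\varepsilon,\ell} \int_{\Gamma_{int}} | \tilde\psi^\varepsilon |^2 \geq \frac{\mu_0}{\varepsilon^2} \left( \frac{1}{L} \right)^\gamma \int_{\Gamma_{int}} | \tilde\psi^\varepsilon |^2 \; , \]
where the last step uses that, under the resonant sequence condition, the eigenvalue $ \lambda_{k+2}^{\varepsilon,\ell} $ obeys the spectral gap estimate $ (\ref{eq:SpectralGapCondition}) $ --- exactly the input used in Theorem \ref{thm:GettingDirichletConditions}.

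The one genuinely new point is to check that passing from $ \psi^\varepsilon $ to $ \tilde\psi^\varepsilon $ does not inflate the quadratic form. Writing $ \psi^\varepsilon = \tilde\psi^\varepsilon + c_\Psi^\varepsilon\,\psi_{k+1}^{\varepsilon,\ell} $ as an orthogonal decomposition and integrating by parts in the cross term --- using that $ \psi_{k+1}^{\varepsilon,\ell} $ satisfies Neumann conditions on $ \{\ell\}\times\Sigma_j $ and Dirichlet conditions on the remainder of $ \partial\Gamma_{int}^{\varepsilon,\ell} $, while $ \psi^\varepsilon $ vanishes on $ \partial\Gamma^\varepsilon $, so that all boundary contributions drop --- the cross term becomes $ \lambda_{k+1}^{\varepsilon,\ell}\langle\psi_{k+1}^{\varepsilon,\ell},\tilde\psi^\varepsilon\rangle = 0 $, whence
\[ \int_{\Gamma_{int}} \left( | \nabla\tilde\psi^\varepsilon |^2 - \mu_1^\varepsilon | \tilde\psi^\varepsilon |^2 \right) = \int_{\Gamma_{int}} \left( | \nabla\psi^\varepsilon |^2 - \mu_1^\varepsilon | \psi^\varepsilon |^2 \right) - \lambda_{k+1}^{\varepsilon,\ell}\, | c_\Psi^\varepsilon |^2\, \Vert \psi_{k+1}^{\varepsilon,\ell} \Vert^2 \; . \]
The first term on the right is $ \leq E $, exactly as in Theorem \ref{thm:GettingDirichletConditions} (the contribution of $ \Gamma_{out} $ to the energy is non-negative). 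The correction is harmless: one has $ | c_\Psi^\varepsilon |^2 \Vert \psi_{k+1}^{\varepsilon,\ell} \Vert^2 = | \langle\psi_{k+1}^{\varepsilon,\ell},\psi^\varepsilon\rangle |^2 / \Vert \psi_{k+1}^{\varepsilon,\ell} \Vert^2 \leq \Vert \psi^\varepsilon \Vert^2 \leq 1 $, while $ 0 \leq -\lambda_{k+1}^{\varepsilon,\ell} \leq \mu_0'\,\varepsilon^{-2} L^{-\gamma'} $ by $ (\ref{eq:ResonantStateCondition}) $ --- and in the self-similar situation treated here $ \lambda_{k+1}^{\varepsilon,\ell} $ is in fact exponentially small in $L$, so this term is of the same (negligible) order as the exponentially small errors already discarded throughout.

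With these two observations the rest is a literal transcription of Theorem \ref{thm:GettingDirichletConditions}: combining the two displays bounds $ \int_{\Gamma_{int}} | \tilde\psi^\varepsilon |^2 $, and hence $ \int_{B_j^{\varepsilon,\ell}} | \tilde\psi^\varepsilon |^2 $, by $ \mathcal{O}(\varepsilon^2 L^\gamma) $ up to the negligible correction above; writing $ \tilde\psi^\varepsilon \upharpoonright_{B_j^{\varepsilon,\ell}} = \tilde\Psi^\varepsilon\otimes\chi_1^\varepsilon + \tilde R_j^\varepsilon $ and applying the Poincar\'e-like estimate to $ \tilde\Psi^\varepsilon $ on $ [0,\ell] $, together with $ \Vert \nabla\tilde\Psi^\varepsilon \Vert_{L^2}^2 \leq \int_{B_j^{\varepsilon,\ell}} ( |\nabla\tilde\psi^\varepsilon|^2 - \mu_1^\varepsilon|\tilde\psi^\varepsilon|^2 ) $ and the identity $ \varepsilon^{1/2} = \ell^{1/2} L^{-1/2} $ to recast the $ L^2 $-term, yields $ \Vert \tilde\Psi^\varepsilon \Vert_{L^\infty([0,\ell])} \leq C_1\,\ell^{1/2} + C_2\,\ell^{1/2} L^{-(2-\gamma)/2} $.

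The only real obstacle is the bookkeeping of the second paragraph: one must be sure that subtracting the resonant component genuinely restores quasi-orthogonality to the \emph{whole} span $ \psi_1^{\varepsilon,\ell},\dots,\psi_{k+1}^{\varepsilon,\ell} $ --- so that it is the gap eigenvalue $ \lambda_{k+2}^{\varepsilon,\ell} $, and not $ \lambda_{k+1}^{\varepsilon,\ell} $, that controls the argument --- and that the energy correction $ -\lambda_{k+1}^{\varepsilon,\ell} | c_\Psi^\varepsilon |^2 \Vert \psi_{k+1}^{\varepsilon,\ell} \Vert^2 $, which enters with the ``wrong'' sign since $ \lambda_{k+1}^{\varepsilon,\ell} \leq 0 $, is nevertheless small enough --- by the smallness of $ | \lambda_{k+1}^{\varepsilon,\ell} | $ built into the resonant sequence condition --- to be absorbed into $ C_1, C_2 $ without changing the form of the bound. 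Everything else is identical to Theorem \ref{thm:GettingDirichletConditions}.
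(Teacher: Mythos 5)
Your proposal is correct and is precisely the ``simple modification'' of Theorem \ref{thm:GettingDirichletConditions} that the paper invokes without writing out: the paper supplies no proof of this theorem beyond that one-line remark. You correctly isolate the two points that actually need checking --- that subtracting the resonant component makes $ \tilde\psi^\varepsilon $ quasi-orthogonal to the whole span of $ \psi_1^{\varepsilon,\ell},\dots,\psi_{k+1}^{\varepsilon,\ell} $, so that it is the gap eigenvalue $ \lambda_{k+2}^{\varepsilon,\ell} $ that controls the Rayleigh quotient, and that the energy correction $ -\lambda_{k+1}^{\varepsilon,\ell}\,|c_\Psi^\varepsilon|^2\,\Vert\psi_{k+1}^{\varepsilon,\ell}\Vert^2 $ is harmless in the self-similar setting (where it is exponentially small in $L$, hence absorbed with the other exponential errors the paper systematically discards).
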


Pictorially, we may say that $\psi^\varepsilon$ ``relaxes'' to $ \psi_{k+1}^{\varepsilon, \ell} $.

\subsection*{Dirichlet dynamics on graphs }

In the spectral gap case, our argument is already sufficient to establish that the limit dynamics on the star graph is the decoupled one with Dirichlet boundary conditions at the vertex. Let $ \psi^\varepsilon (t) $ be a solution of the Schr{\"o}dinger equation in $ P_{ (0, E] } L^2 (\Gamma^\varepsilon) $. We know that
\[ \psi^\varepsilon  (t) \upharpoonright_{ \Gamma_{out}^{\varepsilon, \ell} } = \Psi_1^\varepsilon (t) \otimes \chi_1^\varepsilon + R_1^\varepsilon (t) \; . \]
To our purposes, $ R_1^\varepsilon $ can be neglected. The function $ \Psi_1^\varepsilon $ has a weak limit $ \Psi $, for $ \varepsilon \to 0 $; the limit exists by compactness (choose a subsequence), and it satisfies the weak Schr{\"o}dinger equation on $ \Gamma \backslash \lbrace V \rbrace $. By Theorem \ref{thm:GettingDirichletConditions}, $ \Psi $ satisfies Dirichlet boundary conditions at the vertex $V$:
\[ \Psi_j (0) = 0 \qquad j = 1 \ldots n \]
and this fixes the unitary evolution of $ \Psi $ on the limit graph $ \Gamma $ completely.

Since \textit{any} convergent subsequence has the same limit at time zero, and the limit evolves with the same equation, then the whole family $ \lbrace \Psi_1^\varepsilon \rbrace $ converges to the same limit $ \Psi $, at all times.

\subsection*{Kirchhoff dynamics on graphs}

We have seen that the eigenstates of $ H_{int}^{\varepsilon, \ell} $ which satisfy Equation \ref{eq:SpectralGapCondition} (the spectral gap condition) give in the limit $ \varepsilon \to 0 $ a contribution to the limit wavefunction which is negligible near the vertex. The limit behaviour at the vertex other than Dirichlet conditions is entirely due to the resonant sequence of eigenstates; we must therefore analyze in detail this resonant sequence. In particular, we are interested in finding how the resonant sequence determines the parameters which characterize the self-adjoint extension of the Laplacian on the graph.

\subsubsection*{The resonant wavefunction $\psi_{k+1}^{\varepsilon, \ell}$}

Let the complex numbers $ \beta_1^{\varepsilon, \ell} \ldots \beta_n^{\varepsilon, \ell} $ be defined by
\[ \psi_{k+1}^{\varepsilon, \ell} \upharpoonright_{ \left\lbrace \ell \right\rbrace \times \Sigma_j^\varepsilon } = \beta_j^{\varepsilon, \ell} \cdot \chi_1^\varepsilon + r_j^{\varepsilon, \ell} \]
where the term $r_j$ stands for the sum of the higher modes $\chi_m^\varepsilon$, $m \geq 2$. We will choose the normalization of $\psi_{k+1}^{\varepsilon, \ell}$ so that
\[ \sum_{j=1}^n | \beta_j^{\varepsilon, \ell} |^2 = 1 \; . \]
With this normalization the numbers $ \beta_j^{\varepsilon, \ell} $ depend on $\varepsilon$ and $\ell$ only through the ratio $\varepsilon/\ell = 1/L$: this is a consequence of the homotheticity of the family of waveguides $ \lbrace \Gamma^\varepsilon \rbrace $. Define
\[ \beta_j = \lim_{ \varepsilon \to 0 } \beta_j^{\varepsilon, \ell } \qquad j = 1 \ldots n \; ; \]
(we assume that the limit exists). Note that by continuity
\[ \sum_{j=1}^n | \beta_j |^2 = 1 \; . \]

Consider the resonant eigenvalue $ \lambda_{k+1}^{\varepsilon, \ell} $. Set $ \sigma = \varepsilon/\ell $; $ \sigma = 1 / L $, so that in our setting $ \sigma \ll 1 $. By a simple scaling argument,
\[ \lambda_{k+1}^{\varepsilon, \ell} = \varepsilon^{-2} \lambda_{k+1}^{1, \ell/\varepsilon} \equiv \varepsilon^{-2} \lambda_{k+1} (\sigma) \]
where $\lambda_{k+1} (\sigma)$ is the $(k+1)$-th eigenvalue of the rescaled mesoscopic region of thickness $ \varepsilon = 1 $ and length $ L = 1 / \sigma $. Define
\[ \theta = \lim_{ \sigma \to 0 } - \frac{\lambda_{k+1} (\sigma) - \lambda_{k+1} (0)}{\sigma} \]
(we assume that this limit exists, too). $ \theta $ is (minus) the derivative of the resonant eigenvalue $ \lambda_{k+1} $ with respect to the parameter $ \sigma $. We point out that $ \theta $ is nonnegative.

The following estimates are useful to recover the boundary conditions at the vertex. Consider the function $ \psi^J $ defined on $ \Gamma_{int}^{\varepsilon, \ell} $, solution of the inhomogeneous Dirichlet problem
\[ \left\lbrace \begin{array}{l}
- \Delta u - \mu_1^\varepsilon u = 0 \\
u \upharpoonright_{\left\lbrace \ell \right\rbrace \times \Sigma^\varepsilon_j} = \beta_j \cdot \chi_1^\varepsilon \quad j=1 \dots n
\end{array} \right. \; . \]

\begin{theorem}\label{thm:FormulaForTheta}
The following formula holds true:
\[ \theta = \lim_{ \varepsilon \to 0 } \varepsilon \cdot \theta^{\varepsilon, \ell} \qquad \theta^{\varepsilon, \ell} \equiv - \oint_{ \partial \Gamma_{int}^{\varepsilon, \ell} } (\psi_{k+1}^{\varepsilon, \ell})^* \partial_\nu \psi^J \; . \]
\theoremend
\end{theorem}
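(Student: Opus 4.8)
The engine is Green's second identity on $D:=\Gamma_{int}^{\varepsilon,\ell}$, applied to $u=(\psi_{k+1}^{\varepsilon,\ell})^*$ and $v=\psi^J$. Split $\partial D=\partial_N\cup\partial_D$, where $\partial_N=\bigcup_j\{\ell\}\times\Sigma_j^\varepsilon$ carries the Neumann condition for $\psi_{k+1}^{\varepsilon,\ell}$ and the data $\psi^J=\beta_j\chi_1^\varepsilon$, while $\partial_D$ is the remaining (lateral plus vertex) part, on which both functions vanish. Since $\Delta\psi^J=-\mu_1^\varepsilon\psi^J$ and $\Delta\psi_{k+1}^{\varepsilon,\ell}=-(\mu_1^\varepsilon+\lambda_{k+1}^{\varepsilon,\ell})\psi_{k+1}^{\varepsilon,\ell}$ (with $\lambda_{k+1}^{\varepsilon,\ell}$ real), the bulk term of Green collapses to $\lambda_{k+1}^{\varepsilon,\ell}\int_D(\psi_{k+1}^{\varepsilon,\ell})^*\psi^J$. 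On $\partial_D$ the boundary integrand vanishes identically; on $\partial_N$ the term $\psi^J\,\partial_\nu(\psi_{k+1}^{\varepsilon,\ell})^*$ drops by the Neumann condition, so what is left is exactly $\oint_{\partial\Gamma_{int}^{\varepsilon,\ell}}(\psi_{k+1}^{\varepsilon,\ell})^*\partial_\nu\psi^J=-\theta^{\varepsilon,\ell}$. Hence
\[ \theta^{\varepsilon,\ell}=-\lambda_{k+1}^{\varepsilon,\ell}\,\langle\psi_{k+1}^{\varepsilon,\ell},\psi^J\rangle_{L^2(\Gamma_{int}^{\varepsilon,\ell})}, \]
the (exponentially small) discrepancies coming from replacing $\beta_j^{\varepsilon,\ell}$ by $\beta_j$ or from quasi-orthogonality to the bound states being absorbed exactly as elsewhere in the paper.

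Next I insert the scaling. From $\lambda_{k+1}^{\varepsilon,\ell}=\varepsilon^{-2}\lambda_{k+1}(\sigma)$ with $\sigma=\varepsilon/\ell$ and $\lambda_{k+1}(0)=0$ one gets $-\varepsilon\ell\,\lambda_{k+1}^{\varepsilon,\ell}=-\lambda_{k+1}(\sigma)/\sigma\to\theta$, which is precisely the definition of $\theta$. Therefore
\[ \varepsilon\,\theta^{\varepsilon,\ell}=\bigl(-\varepsilon\ell\,\lambda_{k+1}^{\varepsilon,\ell}\bigr)\cdot\frac{\langle\psi_{k+1}^{\varepsilon,\ell},\psi^J\rangle}{\ell}, \]
so the theorem is equivalent to showing $\ell^{-1}\langle\psi_{k+1}^{\varepsilon,\ell},\psi^J\rangle\to1$ as $\varepsilon\to0$.

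To evaluate the overlap, decompose $\Gamma_{int}^{\varepsilon,\ell}=\Omega^\varepsilon\cup\bigcup_jB_j^{\varepsilon,\ell}$. On $\Omega^\varepsilon$ the contribution is controlled by the pointwise size of the two functions near the vertex and is lower order relative to $\ell$. On each cylinder, expand both functions in transverse modes: the higher modes of $\psi^J$ and of $\psi_{k+1}^{\varepsilon,\ell}$ are negligible on $B_j^{\varepsilon,\ell}$ by the spectral gap to $\mu_2^\varepsilon$ (the mechanism of Theorem~\ref{thm:MultiscaleDecoupling}), so only the first modes $g_j$ of $\psi_{k+1}^{\varepsilon,\ell}$ and $f_j$ of $\psi^J$ matter. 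Here $g_j$ solves $-g_j''=\lambda_{k+1}^{\varepsilon,\ell}g_j$ with $g_j'(\ell)=0$, $g_j(\ell)=\beta_j^{\varepsilon,\ell}$ (so $g_j(x)=\beta_j^{\varepsilon,\ell}\cosh(\kappa(x-\ell))$, $\kappa=(-\lambda_{k+1}^{\varepsilon,\ell})^{1/2}$), and $f_j$ is affine with $f_j(\ell)=\beta_j$, its value at $x=0$ being fixed by the matching conditions that the common vertex imposes on the branch modes. One then computes $\sum_j\int_0^\ell g_j^*f_j$ explicitly and, using $\beta_j^{\varepsilon,\ell}\to\beta_j$ and $\sum_j|\beta_j|^2=1$ together with those matching relations, identifies the leading behaviour as $\ell(1+o(1))$.

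The genuine obstacle is this last step. Near the vertex the near-threshold resonance makes $\psi_{k+1}^{\varepsilon,\ell}$ (and correspondingly $\psi^J$) delicate to control: their first-mode profiles are not uniformly bounded there, so one must show that these large near-vertex pieces enter the overlap only through the normalised profile $\beta_j\chi_1^\varepsilon$ and do not spoil the $O(\ell)$ main term — i.e.\ one must track a cancellation dictated precisely by the matching condition at the vertex. An alternative, perhaps more transparent, route is to keep the flux in the form $\theta^{\varepsilon,\ell}=-\sum_j(\beta_j^{\varepsilon,\ell})^*f_j'(\ell)+o(1)$, and to identify the outward first-mode derivative $f_j'(\ell)$ of the harmonic lift at the mesoscopic end with the $\sigma$-derivative of the resonant eigenvalue by differentiating the eigenvalue matching relation; the spectral-gap estimates are what render all remaining contributions negligible. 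Either way, combined with the Green identity above this pins $\theta$ down as the eigenvalue derivative expressed through a pure boundary flux, and in particular re-derives $\theta\ge0$ from $\lambda_{k+1}^{\varepsilon,\ell}\le0$.
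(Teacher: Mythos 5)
Your argument is essentially the paper's own proof: the same Gauss--Green identity applied to $(\psi^J)^*$ and $\psi_{k+1}^{\varepsilon,\ell}$ on $\Gamma_{int}^{\varepsilon,\ell}$ reduces $\theta^{\varepsilon,\ell}$ to $-\lambda_{k+1}^{\varepsilon,\ell}\,\langle \psi_{k+1}^{\varepsilon,\ell},\psi^J\rangle$, and the same scaling $\lambda_{k+1}^{\varepsilon,\ell}=\varepsilon^{-2}\lambda_{k+1}(\sigma)$ then produces the limit $\theta$. The step you single out as the genuine obstacle, namely $\ell^{-1}\langle \psi_{k+1}^{\varepsilon,\ell},\psi^J\rangle\to 1$, is precisely the point the paper settles by asserting $\lim\int(\psi^J)^*\psi_{k+1}^{\varepsilon,\ell}=\lim\int|\psi^J|^2$ and $\lim \ell^{-1}\int|\psi^J|^2=\sum_j|\beta_j|^2=1$ without further justification, so your treatment is, if anything, more explicit than the original.
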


\begin{proof}
Observe that $ \psi_{k+1}^{\varepsilon, \ell} $ is the solution to an inhomogeneous problem similar to the one solved by $ \psi^J $, but with different boundary data on $ \lbrace \ell \rbrace \times \Sigma_j^\varepsilon $:
\[ f_j \equiv \psi_{k+1}^{\varepsilon, \ell} \upharpoonright_{ \lbrace \ell \rbrace \times \Sigma_j^\varepsilon } = \beta_j^{\varepsilon, \ell} \cdot \chi_1^\varepsilon + \ldots \]
Since both the boundary data and the eigenvalues of the two systems converge, we expect the solutions to converge to each other in the limit $ \varepsilon \to 0$ in a suitable Sobolev topology. If the boundary data converge in $ L^2 $, $ \psi^J $ converges to $ \psi_{k+1}^{\varepsilon, \ell} $ uniformly, but their derivatives (in particular, the normal derivatives at the boundary) in general do not.

Consider the Gauss-Green identity
\[ \int_\Omega \Delta f \cdot g - \int_\Omega f \cdot \Delta g = \oint_{\partial \Omega} \partial_\nu f \cdot g - \oint_{\partial \Omega} f \cdot \partial_\nu g \]
and apply it for $ f = ( \psi^J )^* $, $ g = \psi_{k+1}^{\varepsilon, \ell} $, $ \Omega = \Gamma_{int} $; using the equations
\[ - \Delta \psi^J = \mu_1^\varepsilon \cdot \psi^J, \quad - \Delta \psi_{k+1}^{\varepsilon, \ell} = ( \mu_1^\varepsilon + \lambda_{k+1}^{\varepsilon, \ell} ) \cdot \psi_{k+1}^{\varepsilon, \ell} \]
we get
\[ \int_\Omega \Delta f \cdot g - \int_\Omega f \cdot \Delta g = \lambda_{k+1}^{\varepsilon, \ell} \int_{\Gamma_{int}} ( \psi^J )^* \psi_{k+1}^{\varepsilon, \ell} \]
which is equal to
\[ \oint_{\partial \Omega} \partial_\nu f \cdot g - \oint_{\partial \Omega} f \cdot \partial_\nu g \]
\[ = \oint_{\partial \Gamma_{int}} \partial_\nu (\psi^J)^* \cdot \psi_{k+1}^{\varepsilon, \ell} - \oint_{\partial \Gamma_{int}} ( \psi^J )^* \cdot \partial_\nu \psi_{k+1}^{\varepsilon, \ell} = \theta^{\varepsilon, \ell} \; . \]

Note that
\[ \lim_{ \varepsilon \to 0 } \int_{\Gamma_{int}} ( \psi^J )^* \psi_{k+1}^{\varepsilon, \ell} = \lim_{ \varepsilon \to 0 } \int_{\Gamma_{int}} | \psi^J |^2 \]
and
\[ \lim_{ \varepsilon \to 0 } \frac{\int_{\Gamma_{int}} | \psi^J |^2}{\ell} = \sum_{j=1}^n | \beta_j |^2 = 1 \; . \]
Thus
\[ \lim_{ \varepsilon \to 0 } \varepsilon \cdot \theta^{\varepsilon, \ell} = \lim_{ \varepsilon \to 0 } - \varepsilon \cdot \ell \cdot \left( \lambda_{k+1}^{\varepsilon, \ell} \right) \]
\[ = \lim_{ \varepsilon \to 0 } - \frac{\ell}{\varepsilon} \left( \lambda_{k+1}^{1, \ell/\varepsilon} - \lambda_{k+1}^{1, \infty} \right) = \theta \; . \]
(note that $ \lambda_{k+1}^{1, \infty} = 0 $, this term was introduced to emphasize the fact that the definition of $ \theta $ involves ``derivative'' of the resonant energy level $ \lambda_{k+1}^{\varepsilon, \ell} $.
\end{proof}

\subsubsection*{Kirchhoff graphs}

The argument runs exactly in the same way as in the Dirichlet case, except for establishing the vertex conditions, which is what we are going to do. Let us consider a generic wavefunction $ \psi^\varepsilon $ in $ P_{ (0, E] } L^2 (\Gamma^\varepsilon) $; recall that
\[ \psi^\varepsilon \upharpoonright_{ \Gamma_{out}^{\varepsilon, \ell} } = \Psi_1^\varepsilon \otimes \chi_1^\varepsilon + R_1^\varepsilon \; ; \]
by evaluating this expression on $ \lbrace \ell \rbrace \times \Sigma_j^\varepsilon $ we see that the right hand side equals
\[ \Psi_1^\varepsilon (\ell) \cdot \chi_1^\varepsilon + \ldots \]
(dots stand for the negligible higher modes). By Theorem \ref{thm:RelaxOnResonantSequence} we see that
\[ \psi^\varepsilon \upharpoonright_{ \lbrace \ell \rbrace \times \Sigma_j^\varepsilon } = c^\varepsilon_\Psi \cdot \psi_{k+1}^{\varepsilon, \ell} \upharpoonright_{ \lbrace \ell \rbrace \times \Sigma_j^\varepsilon } + \ldots \]
and since
\[ \psi_{k+1}^{\varepsilon, \ell} \upharpoonright_{ \lbrace \ell \rbrace \times \Sigma_j^\varepsilon } = \beta_j^{\varepsilon, \ell} \cdot \chi_1^\varepsilon + \ldots \]
the comparison of the coefficients of the fundamental mode $ \chi_1^\varepsilon $ gives
\[ \Psi_{1, j}^\varepsilon (\ell) = c^\varepsilon_\Psi \cdot \beta_j^{\varepsilon, \ell} + \ldots \]
which, for the limit wavefunction $ \Psi $, reads
\[ \Psi_j (0) = \beta_j \cdot c_\Psi \; . \]

Now consider the restriction of $ \psi^\varepsilon $ to the mesoscopic region $ \Gamma_{int}^{\varepsilon, \ell} $. We make use of the Gauss-Green identity:
\[ \int_\Omega \Delta f \cdot g - \int_\Omega f \cdot \Delta g = \oint_{\partial \Omega} \partial_\nu f \cdot g - \oint_{\partial \Omega} f \cdot \partial_\nu g \; , \]
substituting $\Gamma_{int}$ for $\Omega$, $(\psi^J)^*$ for $f$ and $ \psi^\varepsilon $ for $g$.

The volume integrals vanishes in the limit $ \varepsilon \to 0 $. In fact, using that $ -\Delta \psi^J = \mu_1^\varepsilon \psi^J $ we obtain
\[ \int_\Omega \Delta f \cdot g - \int_\Omega f \cdot \Delta g = \int_{\Gamma_{int}} \left( \Delta \psi^J \right)^* \cdot \psi^\varepsilon - \int_{\Gamma_{int}} \left( \psi^J \right)^* \cdot \Delta \psi^\varepsilon \]
\[ = \int_{\Gamma_{int}} \left( \psi^J \right)^* [ -\Delta - \mu_1^\varepsilon ] \psi^\varepsilon \; . \]
It is clear that $ [ -\Delta - \mu_1^\varepsilon ] \psi^\varepsilon $ belongs to $ P_{(0, E]} L^2 (\Gamma^\varepsilon) $ as $ \psi^\varepsilon $ does: by dividing the integral into the region $ \Omega^\varepsilon $ and the cylinders, it is easy to see that this whole volume term is vanishing like $ \mathcal{O} (\ell) $, or faster.

Now consider the surface integrals. By Theorems \ref{thm:HigherModesOfPsiAtTheBoundary} and \ref{thm:RelaxOnResonantSequence}, we know that at the boundary of $ \partial \Gamma_{int}$ $\psi^\varepsilon $ is proportional to $ \psi_{k+1}^{\varepsilon, \ell} $, plus negligible (in $ L^2 $-norm) terms:
\[ \psi^\varepsilon \simeq c^\varepsilon_\Psi \cdot \psi_{k+1}^{\varepsilon, \ell} \; . \]
Moreover
\[ \oint_{\partial \Omega} \partial_\nu f \cdot g - \oint_{\partial \Omega} f \cdot \partial_\nu g = \oint_{\partial \Gamma_{int}} \left( \partial_\nu \psi^J \right)^* \cdot \psi^\varepsilon - \oint_{\partial \Gamma_{int}} \left( \psi^J \right)^* \cdot \partial_\nu \psi^\varepsilon \]
\[ \simeq c^\varepsilon_\Psi \cdot \oint_{\partial \Gamma_{int}} \left( \partial_\nu \psi^J \right)^* \psi_{k+1}^{\varepsilon, \ell} - \oint_{\partial \Gamma_{int}} \left( \psi^J \right)^* \cdot \partial_\nu \psi^\varepsilon \; . \]

We can substitute the Fourier expansion for $ \psi^\varepsilon $, and as usual the only nontrivial contribution comes from the fundamental mode $ \Psi_1^\varepsilon $: apart from negligible contributions, the surface integral equals
\[ - \theta^{\varepsilon, \ell} \cdot c^\varepsilon_\Psi - \sum_{j=1}^n (\beta_j^{\varepsilon, \ell})^* \cdot (\Psi_{1, j}^\varepsilon)' (0) = \]
\[ - \frac{1}{\varepsilon} \cdot \left( \varepsilon \theta^{\varepsilon, \ell} \right) \cdot c^\varepsilon_\Psi - \sum_{j=1}^n (\beta_j^{\varepsilon, \ell})^* \cdot (\Psi_{1, j}^\varepsilon)' (0) \; . \]

\begin{remark}
The $ \varepsilon^{-1} $ factor above is rather surprising: notice that we can factor it out if we use the microscopic coordinate $ X $ ($ x = \varepsilon X $), so that another $ \varepsilon^{-1} $ comes out of the derivative term. We want to stress that, despite that the use of $ X $ rather than $ x $ may look curious, our aim is, given $ \Gamma^\varepsilon $ ($ \varepsilon $ is small, but finite), to determine \textit{uniquely} $ n+1 $ numbers $ \beta_1 \ldots \beta_n $, $ \theta $ (or $ \theta/\varepsilon $!) which are sufficient to completely describe the effective dynamics on the limit graph $ \Gamma $. The fact that these numbers may be normalized differently, or be covariant with respect to reparametrizations of $ \Psi_1^\varepsilon $, is physically irrelevant.
\end{remark}

If we choose to work with the macroscopic coordinate $X$, we can easily take the limit of the last equation for $ \varepsilon \to 0 $, and we obtain
\[ \sum_{j=1}^n \beta_j^* \cdot \Psi_j' (0) + \theta \cdot c_\Psi = 0 \; . \]
To get quantitative estimates of the error terms, we have to assume some concrete estimate on the convergence of the parameters $ \beta_j^{\varepsilon, \ell} $, $ \theta^{\varepsilon, \ell} $ to their limits: for example,
\[ \beta_j^{\varepsilon, \ell} - \beta_j = \mathcal{O} (L^{-\kappa}), \quad j = 1 \ldots n; \qquad \varepsilon \cdot \theta^{\varepsilon, \ell} - \theta = \mathcal{O} (L^{-\kappa}) \]
for some exponent $ \kappa $ gives an error $ \mathcal{O} (\varepsilon^{-1} L^{-\kappa}) $. Summarizing, we established that the limit wavefunction $ \Psi $ satisfies
\[ \left\lbrace \begin{array}{l} \Psi_j (0) = \beta_j \cdot c_\Psi \\ \sum_{j=1}^n \beta_j^* \cdot \Psi_j' (0) + \theta \cdot c_\Psi = 0 \end{array} \right.
\]
which is sufficient to define the domain of the self-adjoint Hamiltonian $ H_\Gamma $.

\section{(Weak) convergence of resolvents}

In this section we compare the resolvents of the waveguide Hamiltonian $ H^\varepsilon = - \Delta - \mu_1^\varepsilon $ (at finite $ \varepsilon $) and the graph Hamiltonian $ H_\Gamma $. We will discuss explicitly the Kirchhoff case, so we will consider the situation of waveguides satisfying the resonant state condition; the same argument, with straightforward changes, applies to waveguides with spectral gap.

We consider a lift $ H_\Gamma^\varepsilon $ (constructed below) of the self-adjoint extension $ H_\Gamma $ of the free Laplacian on $ \Gamma \backslash \lbrace V \rbrace $. We will prove that the resolvent $ (H^\varepsilon - z)^{-1} $ weakly converges to $ (H_\Gamma^\varepsilon - z)^{-1} $ if and only if the parameters which define $ H_\Gamma $ coincide with those determined by the resonant sequence. Convergence holds true for finite-energy states orthogonal to the bound states of $ \Gamma^\varepsilon $.

Notice that the set of self-adjoint extensions of the free Laplacian on the graph is much larger than the set of extensions that we obtain: our construction indicates that only this small sub-class can be obtained as a limit of the Dirichlet Laplacian on $ \Gamma^\varepsilon $. We shall come back to this point in the conclusions.

According to our construction, the Hamiltonian $ H_\Gamma^\varepsilon $ is zero on states which have no component in the first transversal mode of the channels. Since these states belong to the continuous spectrum of $ H^\varepsilon $, no stronger convergence is to be expected for the resolvents.

\subsection*{Lifting graph Hamiltonians to $\Gamma^\varepsilon$}

Let us consider the self-adjoint Hamiltonian $ H_\Gamma $ on the graph $ \Gamma $, extending the free Laplacian on the edges; the domain of the Hamiltonian is characterized by the set of numbers $ \beta_1 \ldots \beta_n $, $ \theta $ in the following way
\[ \Psi \in \mathcal{D} (H_\Gamma) \iff \left\lbrace \begin{array}{l} \Psi_j (0) = \beta_j \cdot c_\Psi \\ \sum_{j=1}^n \beta_j^* \cdot \Psi_j' (0) + \theta \cdot c_\Psi = 0 \end{array} \right. \]
Given a function $ \Psi : \Gamma \to \mathbb{C} $, in the domain $\mathcal{D}(H)$, we can lift it to a function on $ \Gamma^\varepsilon $ with the operator $ J $ as follows:
\[ \left\lbrace \begin{array}{l} J \Psi \upharpoonright_{ \Gamma_{out} } = \Psi \otimes \chi_1^\varepsilon \\ J \Psi \upharpoonright_{ \Gamma_{int} } = 0 \end{array} \right. \]
In this way, the whole domain $ \mathcal{D} $ can be lifted to $ \Gamma^\varepsilon $: on the closure of this Hilbert subspace, we may define the lifted Hamiltonian $ H_\Gamma^\varepsilon $:
\[ H_\Gamma^\varepsilon (J \Psi) \equiv J ( H_\Gamma \Psi ) \]
and on the complement of $ J \mathcal{D} $ we simply set $ H_\Gamma^\varepsilon $ equal to zero. This means that, for instance, $ H_\Gamma^\varepsilon $ is null on all ``high-energy'' states $ \Psi \otimes \chi_m^\varepsilon $, $ m \geq 2 $.

\subsection*{A useful computation}

The following computation will be needed in the comparison of resolvents; for easier reference, we prove it separately and state the result at the end of the paragraph. Let $ \varphi $ belong to the domain of $ H^\varepsilon $ and $ \psi $ belong to the domain of $ H_\Gamma^\varepsilon $, that is to say, $ \psi = J \Psi $, for $ \Psi \in \mathcal{D}(H_\Gamma) $. We want to compute the difference
\[ \langle H^\varepsilon \varphi, \psi \rangle - \langle \varphi, H_\Gamma^\varepsilon \psi \rangle \; . \]

Let us compute with $ \langle H^\varepsilon \varphi, \psi \rangle $. Since $ \psi $ vanishes on $ \Gamma_{int}^{\varepsilon, \ell} $ this reduces to an integral on $ \Gamma_{out}^{\varepsilon, \ell} $:
\[ \langle H^\varepsilon \varphi, \psi \rangle = \int_{ \Gamma_{out}^{\varepsilon, \ell} } \left( - \Delta \varphi^* \psi - \mu_1^\varepsilon \varphi^* \psi \right) \; . \]
A simple integration by parts leads to
\[ \langle H^\varepsilon \varphi, \psi \rangle = - \oint_{ \partial \Gamma_{out}^{\varepsilon, \ell} } \partial_\nu \varphi^* \cdot \psi + \int_{ \Gamma_{out}^{\varepsilon, \ell} } \left( \nabla \varphi^* \nabla \psi - \mu_1^\varepsilon \varphi^* \psi \right) \; . \]
We write $ \varphi = \Phi \otimes \chi_1^\varepsilon + \ldots $ (dots stand for higher transverse modes): then
\[ - \oint_{ \partial \Gamma_{out}^{\varepsilon, \ell} } \partial_\nu \varphi^* \cdot \psi = c_\Psi \cdot \left( \sum_{j=1}^n \beta_j^* \cdot \Phi_j' (0) \right)^* \; . \]
Now, we compute $ \langle \varphi, H_\Gamma^\varepsilon \psi \rangle $. Again, this is an integral on $ \Gamma_{out}^{\varepsilon, \ell} $ only:
\[ \langle \varphi, H_\Gamma^\varepsilon \psi \rangle = \int_{ \Gamma_{out}^{\varepsilon, \ell} } \varphi^* \cdot \left( - \Psi'' \otimes \chi_1^\varepsilon \right) \; . \]
Substituting the Fourier decomposition of $ \varphi $ and integrating by parts, we get
\[ \langle \varphi, H_\Gamma^\varepsilon \psi \rangle = \sum_{j=1}^n \Phi_j^* (0) \cdot \Psi_j' (0) + \int_\Gamma ( \Phi' )^* \Psi' \; . \]
Recall that $ \Phi_j (0) = c_\Phi \cdot \beta_j^{\varepsilon, \ell} = c_\Phi \cdot \beta_j + \mathcal{O} (\varepsilon^{-1} L^{-\kappa}) $; the sum above can be simplified as follows
\[ \sum_{j=1}^n \Phi_j^* (0) \cdot \Psi_j' (0) = c_\Phi^* \cdot \left( \sum_{j=1}^n \beta_j^* (0) \cdot \Psi_j' (0) \right) + \mathcal{O} (\varepsilon^{-1} L^{-\kappa}) \]
\[ = - \theta c_\Phi^* c_\Psi + \mathcal{O} (\varepsilon^{-1} L^{-\kappa}) \; . \]

It is immediate to verify that $ \int_\Gamma ( \Phi' )^* \Psi' $ equals $ \int_{ \Gamma_{out}^{\varepsilon, \ell} } \left( \nabla \varphi^* \nabla \psi - \mu_1^\varepsilon \varphi^* \psi \right) $; so we conclude that
\[ \langle H^\varepsilon \varphi, \psi \rangle - \langle \varphi, H_\Gamma^\varepsilon \psi \rangle = c_\Psi \cdot \left\lbrace \sum_{j=1}^n \beta_j^* \cdot \Phi_j' (0) + \theta \cdot c_\Phi \right\rbrace^* + \mathcal{O} (\varepsilon^{-1} L^{-\kappa}) \; ; \]
the term in curly brackets vanishes (like $ \mathcal{O} (\ell^{1/2}) + \mathcal{O} (\varepsilon^{-1} L^{-\kappa}) $) if and only if the boundary conditions of $ H_\Gamma $ at the vertex are chosen as those implied by the resonant sequence.

\subsection*{The resolvents of $H^\varepsilon$ and $H_\Gamma^\varepsilon$}

We are now in position to compute the difference of the resolvents of $ H^\varepsilon $ and $ H_\Gamma^\varepsilon $. These operators are bounded, so that it suffices to control their difference on a suitable dense subset of the Hilbert space $ L^2 (\Gamma^\varepsilon) $.

Let $ z $ be a complex number with $ \mathfrak{Im}(z) \neq 0 $; let $ E $ be a fixed positive number. Let us consider $ \varphi, \, \psi \in W^{1, 2} (\Gamma^\varepsilon) $, normalized, with $ \psi $ in the domain of $ H_\Gamma^\varepsilon $. Both $ \varphi $ and $ \psi $ have energy less than $ E $, and are orthogonal to the bound states of $ \Gamma^\varepsilon $ (i.e. they have components in the continuous spectrum only). We prove the following

\begin{theorem}\label{thm:DifferenceOfResolvents}
Consider the expectation value
\[ F^\varepsilon (\varphi, \psi) \equiv \left\langle \varphi, \left( \frac{1}{H^\varepsilon - z} - \frac{1}{H_\Gamma^\varepsilon - z} \right) \psi \right\rangle \; ; \]
where $ \varphi $, $ \psi $ are normalized states in $ L^2 (\Gamma^\varepsilon) $, with $ \varphi $ orthogonal to the bound states and of finite energy: then
\[ F^\varepsilon = \mathcal{O} (\ell^{1/2}) + \mathcal{O} (\varepsilon^{-1} L^{-\kappa}) \; . \]
\theoremend
\end{theorem}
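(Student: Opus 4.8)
The plan is to reduce the estimate to the bilinear‑form computation of the previous paragraph (``A useful computation''). Set $a \equiv (H^\varepsilon - \bar z)^{-1}\varphi$ and $b \equiv (H_\Gamma^\varepsilon - z)^{-1}\psi$, so that $a \in \mathcal{D}(H^\varepsilon)$, $b \in \mathcal{D}(H_\Gamma^\varepsilon)$, $\varphi = (H^\varepsilon - \bar z)a$ and $\psi = (H_\Gamma^\varepsilon - z)b$. Using self‑adjointness of $H^\varepsilon$ and of $H_\Gamma^\varepsilon$, a one‑line manipulation gives
\[ F^\varepsilon(\varphi,\psi) = \langle a, \psi\rangle - \langle\varphi, b\rangle = \langle a, H_\Gamma^\varepsilon b\rangle - \langle H^\varepsilon a, b\rangle = -\left( \langle H^\varepsilon a, b\rangle - \langle a, H_\Gamma^\varepsilon b\rangle \right), \]
so that $F^\varepsilon$ is, up to sign, precisely the quantity $\langle H^\varepsilon\varphi,\psi\rangle - \langle\varphi, H_\Gamma^\varepsilon\psi\rangle$ already computed, now evaluated at the pair $(a,b)$ rather than $(\varphi,\psi)$.

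The next step is to check that $a$ and $b$ satisfy the hypotheses under which that computation yields a small answer. Since $(H^\varepsilon - \bar z)^{-1}$ commutes with the spectral projections of $H^\varepsilon$ and $\varphi = P_{(0,E]}\varphi$ is orthogonal to the bound states, the same holds for $a = P_{(0,E]}a$; hence $a$ is a finite‑energy state orthogonal to the bound states, and Theorems \ref{thm:MultiscaleDecoupling}, \ref{thm:HigherModesOfPsiAtTheBoundary}, \ref{thm:RelaxOnResonantSequence}, together with the Kirchhoff argument of the previous section, apply to $a$: its first transverse Fourier component on $\Gamma_{out}^{\varepsilon,\ell}$ satisfies the vertex conditions associated with the resonant sequence up to $\mathcal{O}(\ell^{1/2})$. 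Likewise, because $H_\Gamma^\varepsilon$ is block‑diagonal with respect to the splitting of $L^2(\Gamma^\varepsilon)$ induced by the lift $J$, the resolvent $(H_\Gamma^\varepsilon - z)^{-1}$ preserves the lifted subspace, so $b = J\Psi'$ for some $\Psi'\in\mathcal{D}(H_\Gamma)$ and the earlier computation applies verbatim with $\psi$ replaced by $b$. Finally, the scalars $c_a$, $c_b$ and the boundary traces $a_j'(0)$ are controlled uniformly in $\varepsilon$ by $\|\varphi\|/|\mathfrak{Im}(z)|$, $\|\psi\|/|\mathfrak{Im}(z)|$ and $E/|\mathfrak{Im}(z)|$, using $\|a\|\le\|\varphi\|/|\mathfrak{Im}(z)|$, $\|H^\varepsilon a\| = \|H^\varepsilon P_{(0,E]}(H^\varepsilon-\bar z)^{-1}\varphi\|\le (E/|\mathfrak{Im}(z)|)\|\varphi\|$, and the analogous bounds for $b$.

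Plugging into the formula of the previous paragraph then gives
\[ F^\varepsilon(\varphi,\psi) = -\, c_b \cdot \Big\{ \sum_{j=1}^n \beta_j^* \cdot a_j'(0) + \theta\cdot c_a \Big\}^{*} + \mathcal{O}(\varepsilon^{-1}L^{-\kappa}), \]
and, since $a$ approximately obeys the Kirchhoff conditions, the term in curly brackets is $\mathcal{O}(\ell^{1/2})$, so $F^\varepsilon = \mathcal{O}(\ell^{1/2}) + \mathcal{O}(\varepsilon^{-1}L^{-\kappa})$; weak convergence of resolvents then follows since these are bounded operators and $\varphi,\psi$ range over a dense set. Running the same identity with a generic self‑adjoint extension $H_\Gamma$ shows conversely that the curly bracket does not vanish unless the vertex parameters of $H_\Gamma$ are those dictated by the resonant sequence, which gives the ``only if'' half of the dichotomy. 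The one delicate point is exactly the observation above that passing to $a=(H^\varepsilon-\bar z)^{-1}\varphi$ preserves the energy cut‑off and the orthogonality to the bound states — this is what allows the vertex estimates of the earlier theorems to be reused — together with the uniform bookkeeping of $c_a$, $c_b$ and the traces $a_j'(0)$; the remainder is the algebra displayed here combined with the ``useful computation''.
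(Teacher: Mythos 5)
Your argument is correct and follows essentially the same route as the paper: both reduce $F^\varepsilon$ via the resolvent identity to the boundary-term expression $\langle a, H_\Gamma^\varepsilon b\rangle - \langle H^\varepsilon a, b\rangle$ with $a=(H^\varepsilon-\bar z)^{-1}\varphi$ and $b=(H_\Gamma^\varepsilon-z)^{-1}\psi$, check that the resolvents preserve the energy cut-off, the orthogonality to the bound states and the lifted domain, and then invoke the ``useful computation'' of the preceding paragraph. The one point you gloss over (which the paper itself only sketches) is that a general $\psi\in\mathcal{D}(H_\Gamma^\varepsilon)$ may have a component outside the closure of $J\mathcal{D}(H_\Gamma)$ --- supported in $\Gamma_{int}^{\varepsilon,\ell}$ or carried by the higher transverse modes --- on which $(H_\Gamma^\varepsilon-z)^{-1}$ acts as $-1/z$ rather than producing a lifted state $J\Psi'$, so these pieces must be estimated separately (they are negligible by Theorems \ref{thm:MultiscaleDecoupling} and \ref{thm:GettingDirichletConditions}) before concluding by density.
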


We remark that the constants multiplying the various powers of $ \varepsilon $ and $ \ell $ actually depend on the energy $ E $, and blow up when $ E \to \infty $.

\begin{proof}
We will start with the formula for the difference of the resolvents:
\[ \frac{1}{H^\varepsilon - z} - \frac{1}{H_\Gamma^\varepsilon - z} = \frac{1}{H^\varepsilon - z} (H_\Gamma^\varepsilon - H^\varepsilon) \frac{1}{H_\Gamma^\varepsilon - z} \; . \]
by substituting this formula into our expectation value, we get
\[ F^\varepsilon (\varphi, \psi) = \left\langle \varphi_z, (H_\Gamma^\varepsilon - H^\varepsilon) \psi_z \right\rangle \]
where
\[ \varphi_z = (H^\varepsilon - z)^{-1} \varphi \quad \psi_z = (H_\Gamma^\varepsilon - z)^{-1} \psi \; . \]
We remind that in the computation of the resolvent formula, $H_\Gamma^\varepsilon$ is meant to be applied to $\psi_z$, while $H^\varepsilon$ is applied to $\varphi_z$:
\[ \left\langle \varphi, \left( \frac{1}{H^\varepsilon - z} - \frac{1}{H_\Gamma^\varepsilon - z} \right) \psi \right\rangle = \left\langle \varphi_z, H_\Gamma^\varepsilon \psi_z \right\rangle - \left\langle H^\varepsilon \varphi_z, \psi_z \right\rangle \; . \]

We will compute this formula for $ \varphi \in \mathcal{D} (H^\varepsilon) $ and for $ \psi = J \Psi $, $ \Psi \in \mathcal{D} (H_\Gamma) $. We will briefly sketch how to treat the case of $ \psi $ supported in $ \Gamma_{int}^{\varepsilon, \ell} $ and $ \psi $ supported in $ \Gamma_{out}^{\varepsilon, \ell} $ with high-energy components: the theorem then follows by density.

Notice that $ \psi_z = J \Psi_z $, where $ \Psi_z = (H_\Gamma - z)^{-1} \Psi $. Clearly, $ \Psi_z \in \mathcal{D} (H_\Gamma) $. Moreover, $ \varphi_z \in \mathcal{D} (H^\varepsilon) $. Thus, using the formula that we deduced in the previous paragraph,
\[ \left\langle \varphi_z, H_\Gamma^\varepsilon \psi_z \right\rangle - \left\langle H^\varepsilon \varphi_z, \psi_z \right\rangle = \mathcal{O} (\ell^{1/2}) + \mathcal{O} (\varepsilon^{-1} L^{-\kappa}) \; . \]

If $ \psi $ is compactly supported in $ \Gamma_{int}^{\varepsilon, \ell} $, the action of $ H_\Gamma^\varepsilon $ on $ \psi $ is trivial. The remaining integral on $ \Gamma_{int}^{\varepsilon, \ell} $ is estimated as usual, and is negligible. If $ \psi $ is an high-energy state, the action of $ H_\Gamma^\varepsilon $ is again trivial, and we notice that the components of $ \varphi $ along the higher transverse modes are suppressed: hence, the integral vanishes as $ \varepsilon \to 0 $ and $ \ell \to 0 $.

\end{proof}

This theorem implies that on scattering states with finite energy the difference of the resolvents goes to zero if $ \varepsilon \to 0 $ and $ \ell \to 0 $.

\section*{The scattering problem}

Consider a waveguide $ \Gamma^\varepsilon $ in $ \mathbb{R}^3 $, and a smooth bounded fast decaying potential $ V $ defined on $ \mathbb{R}^3 $. Let as usual denote by $ H^\varepsilon $ the operator $ - \Delta - \mu_1^\varepsilon $, where $ \Delta $ is the Laplacian with Dirichlet boundary conditions on $ \partial \Gamma^\varepsilon $. Assume that $ \Gamma^\varepsilon $ admits exactly one resonant sequence and let $ H_\Gamma = - \Delta_\Gamma $, where $ \Delta_\Gamma $ is the Laplacian on $ \Gamma $ with vertex conditions adapted to the resonant sequence. Both $ H^\varepsilon $ and $ H_\Gamma $ have the positive real axis as absolutely continuous spectrum. In addition, $ H^\varepsilon $ may have bound states. One can define the wave operators
\[ W^\pm (H^\varepsilon + V, H^\varepsilon) = s - \lim_{t \to \mp \infty} e^{it(H^\varepsilon + V)} e^{-itH^\varepsilon} P_{a.c.} \]
where $ P_{a.c.} $ is the projection onto the absolutely continuous spectrum, and
\[ W^\pm (H_\Gamma + V_\Gamma, H_\Gamma) = s - \lim_{t \to \mp \infty} e^{it(H_\Gamma + V_\Gamma)} e^{-itH_\Gamma} \]
where $ V_\Gamma $ is the evaluation of $ V $ on the graph $ \Gamma $. Asymptotic completeness holds in both cases, and one can define unitary S-matrices $ S^\varepsilon (V) $, $ S_\Gamma ( V ) $. It is easy to find the explicit form of $ W^\pm (H_\Gamma + V_\Gamma, H_\Gamma) $ with perturbative methods of stationary scattering theory, since the generalized eigenfunctions of $ H_\Gamma $ are known. It is more difficult to give an explicit form of $ W^\pm (H^\varepsilon + V, H^\varepsilon) $; therefore it is useful to have an approximation of it in terms of $ W^\pm (H_\Gamma + V_\Gamma, H_\Gamma) $, with error estimates. We sketch here the way this approximation is obtained by the method of resonant sequences: it is essentially an adaptation of the chain rule. Details will be given elsewhere.

We start with the identity, valid for all $ t $ and all positive values of the small parameter $ \varepsilon $:
\[ e^{-itH_\Gamma^\varepsilon} e^{it(H_\Gamma^\varepsilon + V)} e^{-it(H_\Gamma^\varepsilon + V)} e^{it(H^\varepsilon + V)} = e^{-itH_\Gamma^\varepsilon} e^{itH^\varepsilon} e^{-itH^\varepsilon} e^{it(H^\varepsilon + V)} \]
(clearly, we use the lift $ H_\Gamma^\varepsilon $ of $ H_\Gamma $; the wave operators for $ H_\Gamma^\varepsilon $ are the obvious extension of the wave operators of $ H_\Gamma $). By Wiener's lemma, for every $ E > 0 $ one has
\[ \lim_{\varepsilon \to 0} \lim_{T \to \infty} \frac{1}{T} e^{-itH_\Gamma^\varepsilon} e^{itH^\varepsilon} \Pi_E = \Pi_E \quad \lim_{\varepsilon \to 0} \lim_{T \to \infty} \frac{1}{T} e^{-it(H_\Gamma^\varepsilon + V)} e^{it(H^\varepsilon + V)} \Pi_E = \Pi_E \]
where $ \Pi_E $ is the projection operator on states in which $ H^\varepsilon < E $. Since strong limits can be composed one has for any $ E $
\[ \lim_{\varepsilon \to 0} \Pi_E W^\pm (H^\varepsilon + V, H^\varepsilon) \Pi_E = \Pi_E \lim_{\varepsilon \to 0} W^\pm (H_\Gamma^\varepsilon + V, H_\Gamma^\varepsilon) \Pi_E \; . \]
It is easy to recognize in the last term $ W^\pm (H_\Gamma + V_\Gamma, H_\Gamma) $, after the natural identifications.

\section{Comparison with previous results}

We compare briefly our approach and results with previous work on the same problem. There are many published papers on the subject, here we limit ourselves to only some of them.

As we remarked, our main emphasis is on the fact that the limit dynamics at a vertex of the graph depends on the spectral properties of a sequence of Schr{\"o}dinger operators defined on neighborhoods of the vertex in the fattened graph, and that in order to obtain in the limit boundary conditions other than Dirichlet it is necessary that a resonating sequence exists for this sequence of operators.

The role of some sort of resonance has been stressed by B. Pavlov (see \cite{pavlov07a} and previous papers). In \cite{albeverio07} this is rigorously proved in the case of a graph shrinking to a line with a sharp bending, in the context of limit point interactions. In this context the asymptotes at $ \pm \infty $ of a zero-energy resonance provide, in the suitable scaling limit, a connection between the two sides of the singular point on the line. This role of the resonating sequence also in the case of graphs has been our guiding idea.

In a very interesting paper \cite{post05} O. Post has proved (among other things) that some shrinking of one of the tubes towards the vertex to which it is attached is enough to provide Dirichlet boundary conditions in the limit for the edge corresponding to that tube. Our results imply that the shrinking is not necessary if a resonant sequence does not exist.

In another relevant paper S. Molchanov and B. Vainberg \cite{molchanov07} prove (among other things) continuity of the scattering matrix under the shrinking operation $ \varepsilon \to 0 $ for ``fattened graphs''. This is also sufficient to establish a weak form of resolvent convergence, very similar to the one that we propose. This abstract result does not give information on the role of the geometry of a neighborhood of the vertex and therefore does not use information that comes from the specific geometries suggested by physical examples, e.g. from the density of conduction electrons on aromatic molecules.

A refined analysis of the structure of eigenfunctions can be found in the very interesting and relevant paper of D. Grieser \cite{grieser08a}, which again exploits the strategy of studying the scattering matrix. The paper of Grieser is very detailed and full of relevant results. In particular one can extract from it that in the case of ``fattened graphs'' which shrink uniformly in $ \varepsilon $ a resonant sequence has energy exponentially close to the continuum treshold. While our approach is local and relies on energy esitmates and Neumann-Robin bracketing, the approach of \cite{grieser08a} relies on connecting smoothly the eigenfunction of the internal region with scattering waves. In spite of the different estimates, there are points in common between the two approaches. It would in particular be interesting to extend Grieser's results to sets $ \Gamma^\epsilon $ with nonuniform scaling, which have a geometry more in line with physical data.

Note that a suitable notion of resolvent convergence can be achieved both within our approach, and through the continuity in $ \varepsilon $ of scattering matrices. We think that our approach makes the role of the resonant state very transparent: we introduce the resonant sequence, which is a kind of ``local'' (restricted to a neighbourhood of the junction) notion of resonance, and we identify the parameters of the limit graph Hamiltonian in terms of the asymptotic values and the energy of the resonant sequence. Moreover, the dependence of the limit Hamiltonian on the geometry of the junction is actually through the spectrum of the mesoscopic region, which is a significant simplification.

In our paper we give an estimate of the error one makes as a function of $ \varepsilon $ on the evaluation of relevant physical quantities when using the limit dynamics; this can be done in all approaches which take into account the geometric shape of the neighborhoods.

\section{Conclusions and generalizations}

Let $ \Gamma^\varepsilon $ be an $ \varepsilon $-thin, star-shaped waveguide and consider a Schr{\"o}dinger evolution generated by the Hamiltonian $ H^\varepsilon = - \Delta - \mu_1^\varepsilon $, where $ \Delta $ is the Dirichlet Laplacian on $ \Gamma^\varepsilon $; choose $ \mu_1^\varepsilon $ so that the onset of the continuous spectrum is $0$. Define a sequence of auxiliary (disconnected) waveguides $ \Gamma_{out}^{\varepsilon, \ell} $, isometric to $ \Gamma \times \Sigma^\varepsilon $, together with a sequence of lifted Hamiltonians $ H_\Gamma^\varepsilon $; under the assumption that there exist (at most one) resonant sequence of eigenstates, we prove that the resolvent of $ H_\Gamma^\varepsilon $ converges weakly to the resolvent of $ H^\varepsilon $, on low-energy states. Estimates of the difference are given, as well as estimates for the approximation of observables such as density and (longitudinal) current.

The argument is constructive: the parameters that identify $ H_\Gamma $ are uniquely determined in terms of simple properties of the resonant sequence, which in turn depends on the geometry of the vertex. When the same results hold for free Hamiltonians perturbed by an external potential $ V $, the wave operators $ W^\pm (H^\varepsilon + V, H^\varepsilon) $, ``projected'' on low-energy states, indeed reduce to their analogues $ W^\pm (H_\Gamma + V_\Gamma, H_\Gamma) $ on the graph $ \Gamma $, and this provides a useful approximation formula. If the resonant sequence is absent, we prove that the limit Quantum Graph is the decoupled one. The case with more than one resonant sequence is not addressed here.

We point out that the effective dynamics on metric graphs that we obtain with our approximation method belongs to a subset of all possible self-adjoint extensions of the free Laplacian on $ \Gamma \backslash \lbrace V \rbrace $ that are characterized by the boundary conditions
\[ \Psi_j (0) = \beta_j \cdot c_\Psi, \qquad \sum_{j=1}^n \beta_j^* \cdot \Psi_j' (0) + \theta \cdot c_\Psi = 0 \]
where $ \beta_j $ is the asymptotic value of the resonance in the $ j $-th branch of the waveguide and $ \theta $ is the derivative of the resonant energy. In view of applying our methods to study diffusion on thin waveguides \cite{freidlin93}, we remark that these, whenever the $ \beta $'s can be chosen nonnegative, are precisely the conditions under which $ H_\Gamma $ generates a positivity preserving contraction semigroup on $ \Gamma $ (with continuous trajectories when $ \beta_j = \beta_k $) \cite{kostrykin08}. If $ \beta_{k_0} = 0 $ for some $k_0$ the corresponding branch carries an independent process. On the other hand if  $ H^\varepsilon $ has no bound states it generates a positivity preserving contraction semigroup on $ \Gamma^\varepsilon ;$  it is natural to expect  that  the stochastic process generated by $ H^\varepsilon $ when averaged in the transversal direction converges to the process generated by $ H_\Gamma $ when $ \varepsilon \to 0 $. We shall come back to this issue in a future publication.

We conclude with a few remarks. The result described here holds also in the case in which the arms of the waveguide are $ \varepsilon $-neighbourhoods of smooth curves, provided that their sections $ \Sigma^\varepsilon $ are compact, of linear size in $ \varepsilon $ and  the bottom of their Dirichlet spectrum is nondegenerate.  The method can also be used, with simple modifications, to treat generic waveguides which are $ \varepsilon $-neighbourhoods of graphs with a locally finite number of vertices (better: the distance between any couple of vertices is bounded from below by a positive constant) \cite{costa10}.

Even if the method described here is used for a sequence of homotetic thin waveguides, it can be applied to the case where  the shape of the $ \varepsilon $-waveguide depends on the parameter $ \varepsilon $. In particular, it can be applied to study a two-dimensional strip with a bend which becomes sharp as $ \varepsilon \to 0 $. In \cite{albeverio07} this problem was studied in two dimensions, the curved strip was shrunk to a smooth curve and \textit{then} the curve was sharply bent. In this case the motion on the curve is described by the free Schr{\"o}dinger equation plus an attractive potential (due to the curvature) that in the limit $ \varepsilon \to 0 $ gets deeper and concentrates around the origin (the sharp bending point). If the parameters are conveniently chosen one can obtain a zero-energy resonance and a limit motion characterized by Kirchhoff-like boundary conditions, with parameters depending on the resonance. For all other choices one gets decoupling conditions at the origin.

As for concrete examples of resonant sequence, taking inspiration from  pictures of aromatic molecules, we will show that one can have a resonant sequence for a waveguide composed of a thick spherical shell $ R_\varepsilon \leq |x| \leq R'_\varepsilon $ attached  smoothly to cylinders of base $\Sigma_\varepsilon $ where the parameters are chosen in such a way that the first eigenvalue of the Dirichlet Laplacian in the spherical shell is equal to the first eigenvalue of the Dirichlet Laplacian on $ \Sigma_\varepsilon$.

\section*{Aknowledgemts}

We are pleased to thank here S. Albeverio and C. Cacciapuoti for useful discussions in the early stages of this research. One of us (GFDA) is grateful to C. Amovilli for showing him pictures of the density of conduction electrons in aromatic molecules. We are also grateful to the anonymous referee for attracting our attention to refs \cite{molchanov07}, \cite{grieser08} and for useful suggestions.

\section*{References}

\bibliographystyle{jphysicsB}
\bibliographystyle{plain}

\bibliography{quantum_graphs}

\end{document}